%% file: main.tex
\documentclass[11pt]{article}

\input{macros}
\title{When Relaxation Doesn’t Help: RLDCs with Small Soundness Error Yield LDCs}
\author{
Kuan Cheng\thanks{\texttt{ckkcdh@pku.edu.cn}, School of Computer Science, Peking University. 
}
\and
Xin Li\thanks{\texttt{lixints@cs.jhu.edu}, Department of Computer Science, Johns Hopkins University. Supported in part by NSF Award CCF-2127575.}
\and
Songtao Mao\thanks{\texttt{smao13@jhu.edu}, Department of Computer Science, Johns Hopkins University. Supported in part by NSF Award CCF-2127575.}
}
\date{}

\begin{document}

\maketitle

\input{abstract}
\newpage
\tableofcontents
\newpage
\input{Introduction}
\input{Preliminary}
\input{Results}
\input{LBPCPPRLDC}

\bibliographystyle{alpha}
\bibliography{references}

\end{document}

%% file: macros.tex
\usepackage[margin=1in]{geometry}
\usepackage{xcolor}
\usepackage{amsfonts}
\usepackage{amssymb}
\usepackage{amsmath}
\usepackage{array}
\usepackage{verbatim}
\usepackage{graphicx}
\usepackage{mathrsfs} 
\usepackage{algpseudocode}
\usepackage{tikz}
\usetikzlibrary{decorations.pathreplacing}

\usepackage{hyperref} 
\usepackage[ruled,vlined]{algorithm2e}
\usepackage{enumitem}

\usepackage{amsthm}

\newtheorem{theorem}{Theorem}[section]
\newtheorem{corollary}[theorem]{Corollary}
\newtheorem{lemma}[theorem]{Lemma}
\newtheorem{proposition}[theorem]{Proposition}

\theoremstyle{definition}
\newtheorem{definition}[theorem]{Definition}

\newtheorem{example}[theorem]{Example}

\theoremstyle{remark}
\newtheorem{remark}[theorem]{Remark}

\newcommand{\sQ}{\mathcal{Q}}

\newcommand{\NN}{\mathbb{N}}

\DeclareSymbolFont{bbold}{U}{bbold}{m}{n}
\DeclareSymbolFontAlphabet{\mathbbold}{bbold}

\DeclareMathOperator*{\E}{\mathbb{E}}

\newcommand{\supp}{\mathrm{supp}}

\newcommand{\NS}{\mathsf{NS}}
\newcommand{\Err}{\mathrm{Err}}
\newcommand{\Dec}{\mathsf{Dec}}

\title{RLDC is equivalent to LDC in constant query settings}

\date{}

%% file: abstract.tex
\begin{abstract}
Locally decodable codes (LDCs) are error-correcting codes that allow recovery of any single message symbol by probing only a small number of positions from the (possibly corrupted) codeword. Relaxed locally decodable codes (RLDCs) further allow the decoder to output a special failure symbol $\bot$ on a corrupted codeword. While known constructions of RLDCs achieve much better parameters than standard LDCs, it is intriguing to understand the relationship between LDCs and RLDCs. On the one hand, separation results (i.e., the existence of $q$-query RLDCs that are not $q$-query LDCs) are known for $q=3$ \cite{GMWZ25} and $q \geq 15$ \cite{grigorescu2025relaxed}. On the other hand, prior work \cite{block2022relaxed} shows that any $2$-query RLDC also gives a $2$-query LDC, and \cite{grigorescu2025relaxed} shows that any \emph{linear} $3$-query RLDC is also a linear $3$-query LDC. Furthermore, \cite{grigorescu2025relaxed} shows that when the soundness error of a \emph{linear} $q$-query RLDC with perfect completeness is below some threshold $s(q)$, the code must also be a linear $q$-query LDC with comparable parameters.

In this work, we extend the main result in \cite{grigorescu2025relaxed} by removing the linearity requirement in the nonadaptive setting. Specifically, we show that every nonadaptive $(q,\delta,1,s)$-RLDC over a finite alphabet $\Sigma$ with $s<|\Sigma|^{-q}$ yields a $q$-query LDC with comparable decoding radius and error. Our results also extend to the setting of locally correctable codes (LCCs) and relaxed locally correctable codes (RLCCs). From this, we also obtain lower bounds for nonadaptive RLDCs from known LDC lower bounds.
\end{abstract}

%% file: Introduction.tex
\section{Introduction}
Error correction codes are fundamental objects in computer science that enable the correct transmission of messages over a noisy or adversarial channel. While standard decoding algorithms need to look at the entire received codeword, locally decodable codes (LDCs) allow super-fast recovery of individual symbols from a corrupted codeword. In particular, given an LDC \( C: \Sigma^k \rightarrow \Sigma^n \), any symbol of the original message \( x \in \Sigma^k \) can be recovered by querying only a few positions of the corrupted codeword \( y \in \Sigma^n \), even in the presence of worst-case errors. For clarity we focus on the binary alphabet in the introduction; the RLDC-to-LDC and RLCC-to-LCC results extend to arbitrary finite alphabets. We have the following definition.

\begin{definition}[Locally decodable codes (binary LDCs)]
A code $C : \{0,1\}^k \to \{0,1\}^n$ is a
$(q,\delta,s)$-\emph{locally decodable code} if there exists a
$q$-query local decoder
\[
    \Dec : \{0,1\}^n \times [k] \to \{0,1\}
\]
such that for every message $b \in \{0,1\}^k$, every index $i \in [k]$,
and every received word $y \in \{0,1\}^n$ with
$\Delta(y,C(b)) \le \delta n$, we have
\[
    \Pr\bigl[\Dec^y(i) \neq b_i\bigr] \;\le\; s.
\]
\end{definition}

Similarly, a locally correctable code (LCC) allows the recovery of any symbol of the codeword \( C(x) \) by querying only a few positions of the corrupted codeword. Since their introduction by Katz and Trevisan \cite{katz2000efficiency}, these codes have found applications in many areas of computer science, such as private information retrieval, probabilistically checkable proofs, fault-tolerant circuits,  hardness amplification, and data structures (see e.g., \cite{BabaiFLS91,LundFKN92,BlumLR93,BlumK95,ChorKGS98,ChenGW13,AndoniLRW17, Tre04-survey,Gasarch04}).

In the study of LDC/LCCs, one generally assumes that the code can tolerate some constant fraction of Hamming errors. Then, the most important parameters of these codes are the \emph{rate} of the code, defined as the ratio between the message length $k$ and the codeword length $n$, and the number of queries that the decoding algorithm needs to make. On the other hand, with only a few queries allowed, the decoding algorithm is necessarily randomized, hence there is also a small probability (say $< 1/3$) of decoding error when the codeword is corrupted. 

Despite decades of extensive study on the trade-offs between these parameters, in general there is still a large gap between the upper and lower bounds we have. Specifically, for $2$-query LDCs/LCCs we do have the tight bound of $n=2^{\Theta(k)}$ \cite{kerenidis2003exponential, GoldreichKST06, Ben-AroyaRW08, bhattacharyya2017lower}. However, for $q>2$ queries, the current gap between upper and lower bounds on $n$ as a function of $k$ is superpolynomial in $k$. In particular, the best explicit constructions of constant $q$-query LDCs have super-polynomial but sub-exponential codeword length \cite{yekhanin2008towards,dvir2011matching,Efremenko12}, while the best known constructions of constant $q$-query LCCs have codeword length $n=2^{O(k^{\frac{1}{q-1}})}$, given by the Reed-Muller codes. On the other hand, the most general lower bounds for $q\geq 3$ LDCs/LCCs are of the form $n=\tilde{\Omega}( k^{1+\frac{2}{q-2}})$ \cite{katz2000efficiency,woodruff2007new, Basuldc}. For binary linear $3$-query LCCs, the stronger bound $k=O_\delta(\log^2 n\log\log n)$ is known \cite{alrabiah2024near}.

Given this large gap in our understanding of LDC/LCCs, Ben-Sasson, Goldreich, Harsha, Sudan, and Vadhan \cite{ben2004robust} introduced a relaxed version of LDCs (RLDCs), whose decoder may output a special symbol $\bot$ to indicate a decoding failure. Specifically, we have the following definition.

\begin{definition}[Relaxed LDCs (binary RLDCs)]
A code $C : \{0,1\}^k \to \{0,1\}^n$ is a
$(q,\delta,c,s)$-\emph{relaxed locally decodable code} if there exists a
$q$-query relaxed local decoder
\[
    \Dec : \{0,1\}^n \times [k] \to \{0,1\} \cup \{\bot\}
\]
such that for every $b \in \{0,1\}^k$, every $i \in [k]$, and every
$y \in \{0,1\}^n$ with $\Delta(y,C(b)) \le \delta n$:
\begin{enumerate}
    \item \textbf{(Relaxed completeness)} On the uncorrupted codeword,
    \[
        \Pr\bigl[\Dec^{C(b)}(i) = b_i\bigr] \;\ge\; c.
    \]
    \item \textbf{(Relaxed soundness)} On corrupted words,
    \[
        \Pr\bigl[\Dec^y(i) \notin \{b_i,\bot\}\bigr] \;\le\; s.
    \]
\end{enumerate}
When $c = 1$ we say the relaxed decoder has perfect completeness.
\end{definition}

Remarkably, \cite{ben2004robust} and later Asadi and Shinkar \cite{AsadiS21} construct RLDCs with a constant number $q$ of queries and $n=k^{1+O(1/q)}$, which stands in sharp contrast with the best known constructions of standard LDCs. General RLDC lower bounds appear in \cite{gur2021power,de2021structural}. For linear binary RLDCs, Goldberg, Gur, and Saraogi \cite{GGS25} prove the nearly matching bound $n=k^{1+\Omega(1/q)}$.

This motivates the intriguing question of the relation between LDCs and RLDCs. For example, in \cite{ben2004robust}, the authors asked the question whether one can obtain $q$-query RLDCs with codeword length $n = o(k^{1+1/(q-1)})$, which will imply that the relaxation in RLDCs is necessary, because such codes will beat the known lower-bound for LDCs. Alternatively, it may be possible to improve the lower-bound for ($q$-query) LDCs to provide such a separation as well. A recent work \cite{GMWZ25} indeed achieved this in the case of $q=3$, by constructing explicit $3$-query RLDCs with $n=\tilde{O}(k^2)$. However, no such separation based on improved upper/lower bounds is known for $q>3$ to the best of our knowledge. In fact, \cite{kumar2024relaxed} observes that, in the constant-query regime, the best known RLDC block length matches, up to a constant factor in $q$, the block length lower bound for LDCs. We also mention that the recent work \cite{grigorescu2025relaxed} constructed a $15$-query linear RLDC, and showed that it is not a $q$-query LDC for any constant $q$, although the proof is via a direct combinatorial argument instead of proving improved bounds.

Several recent works \cite{block2022relaxed, grigorescu2025relaxed} approached this question from another direction. In particular, \cite{block2022relaxed} shows that any $2$-query RLDC gives a $2$-query LDC with similar parameters, and hence proves an exponential lower bound on the codeword length of $2$-query RLDCs. \cite{grigorescu2025relaxed} further shows that any $3$-query \emph{linear} RLDC is also a $3$-query \emph{linear} LDC with similar parameters. More generally, any $q$-query linear RLDC with perfect completeness and decoding error $< 2^{- \lfloor q/2 \rfloor}$ is also a $q$-query linear LDC with comparable parameters. This implies that the only way for a perfectly complete $q$-query linear RLDC to have the strong soundness property (i.e., the decoding error is proportional to the fraction of errors, a property that is satisfied by constant-query linear LDCs) is for it to be an LDC already.

In this paper, we generalize the last result in \cite{grigorescu2025relaxed} to nonlinear codes in the nonadaptive, perfectly complete setting. We show that any nonadaptive $q$-query RLDC with perfect completeness and soundness error $< 2^{-q}$ is also a $q$-query LDC with comparable parameters. Thus, in this small-soundness-error regime, relaxation does not buy additional power for nonadaptive decoders.

\subsection{Our results}

For clarity of exposition we first state our RLDC-to-LDC result in the binary setting $\Sigma=\{0,1\}$. The RLDC-to-LDC and RLCC-to-LCC results extend in a straightforward way to arbitrary finite alphabets; the general formulations and proofs appear later in Section~\ref{sec:pre} and the subsequent technical sections.

We begin by recalling the prior implication from relaxed to full local decoding in the low-soundness regime. In \cite{grigorescu2025relaxed}, the authors showed that, for linear RLDCs with perfect completeness, sufficiently small relaxed soundness already forces standard local decodability. More precisely:
\begin{theorem}[{\cite[Theorem 2]{grigorescu2025relaxed}}]
Let $C:\{0,1\}^k\to\{0,1\}^n$ be a linear $(q,\delta,1,s)$-RLDC
with a (possibly adaptive) decoder and $s<2^{-\lfloor q/2\rfloor}$, then, for every radius $r\in \bigl(0,\frac{\delta}{2q}(1-s2^{\lfloor q/2\rfloor}) \bigr)$, $C$ is a 
$$(q,r,\frac{2^{-\lfloor q/2\rfloor}rq}{(2^{-\lfloor q/2\rfloor}-s) \delta})-\text{LDC}.$$
Moreover, the analogous implication holds in the RLCC/LCC setting.
\end{theorem}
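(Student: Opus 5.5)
The plan is to build the LDC decoder directly from the relaxed decoder $\Dec$, using linearity to turn every way the relaxed decoder can be fooled into a structured statement about the code.

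\textbf{Step 1: normalize the decoder.} For a linear code with perfect completeness, each decoding path (the sequence of adaptive queries together with the answers that lead to an output) must be consistent with the linear structure. Concretely, on a query set $Q$, the relaxed decoder may output a bit only when the restricted view $y|_Q$ determines $b_i$ through a linear relation. That is, there should be a vector supported on $Q$ whose inner product with the codeword equals $b_i$. Otherwise, by linearity and perfect completeness, there are two codewords that agree on $Q$ but have different $i$-th message bits. On such a view the decoder would have to output $\bot$, or else it would err on one of them with probability one along that path.

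\textbf{Step 2: the new decoder.} The LDC decoder $D'$ runs $\Dec$. Whenever $\Dec$ would output $\bot$ or an undetermined value, $D'$ instead outputs the value forced by a linear relation in the span of the queried coordinates, if one exists. Otherwise it outputs a random bit.

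\textbf{Step 3: small-error analysis by contradiction.} Fix $b$ and a word $y$ with $\Delta(y,C(b))\le rn$. Call a query set \emph{bad} if it contains a corrupted coordinate. Each queried position is corrupted with probability roughly $rn$ times its query probability, so after smoothing I expect
\[
\Pr[\text{bad}]\lesssim \frac{rq}{\delta}.
\]
Smoothing means rescaling by the $\delta$ budget: heavy coordinates, those queried with probability above $1/(\delta n)$, are handled by noting there are at most $q\delta n$ of them. On good query sets, $D'$ is correct. On bad query sets, I need $D'$ to be correct with probability bounded away from zero, in the amortized ratio $2^{-\lfloor q/2\rfloor}-s$.

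The idea for bad sets is as follows. Amplify the corruption to a word $z$ at distance at most $\delta n$ by choosing corrupted coordinates that flip the linear relation along a random direction in the kernel. By linearity, $z - C(b)$ can be taken to be a random codeword-coset shift restricted to the coordinates. The relaxed decoder on $z$ must then output the wrong symbol with probability at least about $2^{-\lfloor q/2\rfloor}\Pr[\text{bad}]$ minus slack. The reason is that a uniformly random assignment to at most $\lfloor q/2\rfloor$ free linear constraints hits the view of a wrong codeword with probability $2^{-\lfloor q/2\rfloor}$. Comparing this with $s$ gives $\Pr[\text{bad}]\cdot(2^{-\lfloor q/2\rfloor}-s)\lesssim 2^{-\lfloor q/2\rfloor} rq/\delta$, which is the claimed error bound. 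The radius condition $r<\frac{\delta}{2q}(1-s2^{\lfloor q/2\rfloor})$ ensures that the amplified word stays within distance $\delta n$.

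\textbf{Main obstacle.} The hardest step is this amplification step. It requires showing that any view that fools the rule from Step 2 can be extended, with $\Omega(2^{-\lfloor q/2\rfloor})$ probability, to a globally consistent corrupted word near a wrong codeword. The $\lfloor q/2\rfloor$ exponent should come from pairing query coordinates via linear dependencies. I would try first a random-shift argument over the kernel of the projection onto $Q$.

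The RLCC-to-LCC case is identical, with $b_i$ replaced by the codeword symbol.
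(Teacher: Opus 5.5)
There is a genuine gap in Steps 2--3. Your decoder $D'$ still reads and trusts the heavily queried coordinates, so the estimate that a sampled $Q$ is ``bad'' with probability about $rq/\delta$ is false for it. If $\sQ_i$ concentrates on sets containing a few fixed coordinates (say some $j_0$ lies in every $Q$), the adversary corrupts exactly those. Then almost every sampled $Q$ is bad, however small $r$ is.

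You invoke ``smoothing'' but never build it into the decoder, and counting the heavy coordinates does not handle them. With your threshold $1/(\delta n)$ there can be $q\delta n$ of them, more than the relaxed radius lets you corrupt. Moreover, the stated error is $O(r)$. That forces the new decoder to be correct on \emph{every} query set whose light part is clean. Falling back on the relaxed decoder's non-$\bot$ output, or on ``the value forced by a linear relation in the span of the queried coordinates,'' cannot ensure this. That relation is ill-defined on an inconsistent view and may use corrupted heavy coordinates.

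The missing idea is the one in \cite{grigorescu2025relaxed}, as summarized in the introduction. Declare $j$ heavy when $p_i(j)>q/(\delta n)$, so that $|H_i|\le\delta n$. Call $Q$ $i$-smoothable if $b_i$ is a linear function of $c|_{Q\cap L_i}$ alone; for linear codes this does not depend on $c$. The LDC decoder samples $Q\sim\sQ_i$ conditioned on smoothability and outputs that light-only linear function. It reads only light coordinates, each with probability at most $q/(\delta n\,p_{i,\mathrm{good}})$. A union bound over the at most $rn$ corrupted positions then gives error at most $rq/(\delta\,p_{i,\mathrm{good}})$.

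Your Step 3 also applies soundness in the wrong direction. Relaxed soundness says nothing useful about the adversary's particular corruption: on a bad $Q$ the relaxed decoder may simply output $\bot$. Also, ``amplifying'' the adversary's corruption along a $Q$-dependent kernel direction does not produce a single received word to which soundness applies.

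Soundness must instead be applied to a corruption chosen in the analysis, independent of $Q$ and of the adversary: copy a random codeword (from a suitable subspace) onto the heavy coordinates. This word is within distance $|H_i|\le\delta n$ of $C(b)$ regardless of $r$, and it leaves every light view intact. For each nonsmoothable $Q$, it makes the view agree with a codeword of opposite $i$-th bit with probability at least $2^{-\lfloor q/2\rfloor}$. Averaging gives $1-p_{i,\mathrm{good}}\le s\,2^{\lfloor q/2\rfloor}$. Substituting into $rq/(\delta\,p_{i,\mathrm{good}})$ yields exactly the stated error.

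The radius condition $r<\frac{\delta}{2q}(1-s2^{\lfloor q/2\rfloor})$ is what makes this error below $1/2$. It is not what keeps the resampled word within distance $\delta n$, contrary to your remark. The paper's proof of Theorem~\ref{thm:derived_ldc} follows the same skeleton, except that smoothability there depends on the codeword, so bad light patterns are answered randomly instead of being excluded.

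Finally, the statement allows adaptive decoders, but everything after your Step 1 treats $Q$ as a query set sampled up front. You would need to explain how a fixed distribution over query sets is obtained.
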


Thus, any linear RLDC with perfect completeness and error below the threshold $2^{-\lfloor q/2\rfloor}$ is automatically an LDC with comparable parameters.
Our main result removes the linearity requirement at the cost of assuming nonadaptivity and using the threshold $2^{-q}$. Roughly speaking, whenever the code has perfect completeness, a nonadaptive relaxed decoder, and soundness error below $2^{-q}$, it already admits a standard local decoder with the same query complexity.

\begin{theorem}[Informal, binary case of Theorem \ref{thm:derived_ldc}]
\label{thm:binary-main}
Let $C : \{0,1\}^k \to \{0,1\}^n$ be a $(q,\delta,1,s)$-RLDC with a
nonadaptive relaxed decoder, and assume $s < 2^{-q}$.
Then for every error radius $r \in  \bigl(0,\frac{\delta}{2q}(1-s2^q) \bigr)$ there exists a
$q$-query local decoder $\Dec'$ such that $C$ is a
\[
    \bigl(q, r,\;  \tfrac{s 2^{q}}{2} + \tfrac{r q}{\delta}\bigr)\text{-LDC}
\]
with respect to $\Dec'$.
\end{theorem}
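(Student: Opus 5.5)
Fix $i\in[k]$. Since $\Dec$ is nonadaptive, $\Dec(i)$ samples a query set $Q\subseteq[n]$ with $|Q|\le q$ from a distribution $\mathcal{D}_i$ independent of $y$, then applies a (possibly randomized) function to $y|_Q$. Our new decoder $\Dec'$ will sample $Q$ from the same $\mathcal{D}_i$ and query $y|_Q$. It then outputs the bit $b'$ for which $y|_Q$ is consistent with some codeword $C(b)$ having $b_i=b'$, meaning $C(b)|_Q=y|_Q$. If $y|_Q$ is consistent with codewords of both values, or with none, it outputs an arbitrary fixed bit.

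To analyze this, fix $b$ and $y$ with $\Delta(y,C(b))\le rn$. Call a query set $Q$ \emph{bad} if some message $b'$ with $b'_i\ne b_i$ satisfies $C(b')|_Q=C(b)|_Q$. The plan is to bound two error sources separately.

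\emph{First error source: $Q$ hits a corruption.} Randomized smoothing is the standard LDC issue here. The original decoder may not be smooth, so the probability that $Q$ hits a corrupted coordinate is not directly bounded by $rq$. I would first apply the standard smoothing reduction. Let $H=\{j:\Pr_Q[j\in Q]>q/(\delta n)\}$. Then $|H|<\delta n$. The relaxed soundness on words corrupted inside $H$ lets us treat those positions as "ignorable". Concretely, I expect the argument to handle $H$ by considering words $z$ that agree with $C(b)$ outside $H$ and are arbitrary on $H$; these are within radius $\delta$. Coordinates outside $H$ each contribute at most $q/(\delta n)$, so corruptions outside $H$ are hit with probability at most $rn\cdot q/(\delta n)=rq/\delta$. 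That is exactly the second term in the stated bound.

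\emph{Second error source: bad $Q$.} The remaining task is to show that bad $Q$ (including ambiguity on $H$) have probability at most $s2^q/2$. The key step uses perfect completeness. If $Q$ is bad, witnessed by $b'$, consider the hybrid word $z$ equal to $C(b')$ on $Q$ and to $C(b)$ elsewhere. Perfect completeness on $C(b')$ forces the decoder, on view $C(b')|_Q=C(b)|_Q$, to output $b'_i\ne b_i$ with probability $1$. But that view is also produced by $C(b)$ itself, where the output must be $b_i$. So exact agreement on $Q$ is impossible, and badness must come from agreement only outside the corrupted or $H$ part.

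To make this usable, I would consider the random word $z$ obtained from $C(b)$ by replacing coordinates of $H\cap Q$ with uniformly random bits, restricting to $|H|\le\delta n$ so that soundness applies to every $z$. With probability at least $2^{-q}$ the random bits match $C(b')$ on $Q\cap H$, for a view forced by completeness to output $b'_i$. Averaging soundness over $z$ then gives $\Pr[\text{bad}]\cdot 2^{-q}\le s$. Refining the count by noting that one of the two candidate bit values is always correct gives the factor $\tfrac12$, hence $s2^q/2$.

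The main obstacle is this hybrid/averaging step: the distribution over $z$ must be fixed before $Q$ is sampled, yet it must realize the "other codeword's view" with probability $2^{-q}$ simultaneously for all bad $Q$. I would first try letting $z$ be uniformly random on all of $H$ and checking that soundness, applied pointwise to each fixed $z$ and then averaged, still yields the bound. The condition $r<\frac{\delta}{2q}(1-s2^q)$ ensures that the total error $s2^q/2+rq/\delta$ is below $1/2$.
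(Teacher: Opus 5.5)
There is a genuine gap: your decoder $\Dec'$ is the wrong object. It outputs the bit determined by consistency of the \emph{entire} view $y|_Q$ with some codeword, and a fixed default bit otherwise. So it reads the heavy coordinates and has no protection against their corruption.

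Your first error source only charges corruptions in $L(Q)$. A corruption in $H\cap Q$ is never accounted for, yet it typically makes $y|_Q$ inconsistent with every codeword, or consistent with a wrong one. Concretely, take a good $(q-1)$-query code and append a coordinate that is $0$ in every codeword. Let the decoder always query it and answer $\bot$ when it reads $1$. This is still an RLDC with essentially the same soundness error, since the extra check only converts outputs to $\bot$. But a single flip of that coordinate makes every view inconsistent, so your decoder outputs the fixed default bit with probability $1$. That is wrong for every message with the other value of $b_i$.

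Your second error source does not rescue this. As you observe yourself, perfect completeness makes your notion of a bad $Q$ (agreement of $C(b')$ and $C(b)$ on all of $Q$) empty. Bounding its probability therefore says nothing about how your decoder actually fails.

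The missing idea is to make the new decoder depend only on the light part of the view. Given $a=y|_{L(Q)}$, consider the completions $z$ of $a$ on the heavy positions with $f_{i,Q}(z)\neq\bot$; equivalently, the codewords agreeing with $a$ on $L(Q)$.
\begin{itemize}
\item If at least one such completion exists and all of them yield the same symbol $\sigma$, output $\sigma$.
\item Otherwise, output a \emph{uniformly random} bit.
\end{itemize}

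With this decoder, heavy corruptions are irrelevant, and light corruptions cost at most $rq/\delta$ exactly as you computed. On light-clean $Q$, the decoder can err only when $Q$ is $(i,c)$-nonsmoothable, i.e.\ some $b'$ with $b'_i\neq b_i$ has $C(b')|_{L(Q)}=c|_{L(Q)}$. Your resampling step bounds the probability of this event by $s2^q$: put uniform bits on all of $H$, fixed before $Q$ is drawn, which match $C(b')$ on $H\cap Q$ with probability at least $2^{-q}$.

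The factor $\tfrac12$ then comes from the random bit being correct with probability $\tfrac12$ on such $Q$. A fixed default bit, as in your proposal, would not give this, because the guarantee must hold for messages with either value of $b_i$.
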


Some exponential dependence of the soundness threshold on $q$ is necessary. Indeed, \cite{grigorescu2025relaxed} exhibits a linear RLDC with constant query complexity that is not an LDC. Independent repetition decreases its soundness error exponentially while increasing its query complexity only linearly, so a threshold with weaker dependence on the query count would eventually turn this code into an LDC, contradicting the construction.

Combining Theorem~\ref{thm:binary-main} with recent lower bounds for LDCs gives lower bounds for nonadaptive RLDCs in the same soundness-error regime.
\begin{corollary} For every constant odd integer $q \ge 3$ and constant $\delta \in (0, 1)$,
 if $C: \{0, 1\}^k \rightarrow \{0, 1\}^n$ is a nonadaptive $(q, \delta, 1, s)$-RLDC with $s\le 2^{-(q+1)}$, then 
$k \le O( n^{1-\frac{2}{q}} \log n  )$.
\end{corollary}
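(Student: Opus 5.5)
The corollary should follow by chaining Theorem~\ref{thm:binary-main} with a known LDC lower bound for odd $q$. The natural choice is the recent bound for $q$-query LDCs with $q$ odd: any $(q,\delta',\varepsilon)$-LDC $C:\{0,1\}^k\to\{0,1\}^n$ with constant $\delta'$ and constant error $\varepsilon<1/2$ satisfies $k\le O(n^{1-2/q}\log n)$, where the hidden constant depends on $q,\delta',\varepsilon$. This is the Alrabiah--Guruswami--Kothari--Manohar / Basu-style spectral refutation bound behind \cite{Basuldc}. I will cite it as a black box: $k \le O_{q,\delta',\varepsilon}(n^{1-2/q}\log n)$.

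First, check the hypotheses of Theorem~\ref{thm:binary-main}. Since $s\le 2^{-(q+1)}<2^{-q}$, the theorem applies, and $s2^q\le 1/2$. Then $1-s2^q\ge 1/2$, so the admissible radius interval contains $(0,\delta/(4q))$. Fix the concrete constant radius $r=\delta/(16q)$, which lies in this interval.

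Second, compute the resulting LDC error. The error parameter is
\[
\tfrac{s2^q}{2}+\tfrac{rq}{\delta}\le \tfrac14+\tfrac1{16}=\tfrac{5}{16}<\tfrac12 .
\]
So $C$ is a nonadaptive $(q,\delta/(16q),5/16)$-LDC, with both parameters constants depending only on $q$ and $\delta$.

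Finally, apply the LDC lower bound with $\delta'=\delta/(16q)$ and $\varepsilon=5/16$ to conclude $k\le O(n^{1-2/q}\log n)$.

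The only step needing care is making sure the error of the derived LDC falls strictly within the range required by the cited lower bound. Some formulations ask for success probability $1/2+\eta$ with constant advantage $\eta$, which here is $\eta=3/16$. If a cited bound instead assumes error at most some smaller constant, I would first reduce the error. The tool would be the standard smoothing/normal-form reduction for LDCs, which preserves $q$ and changes $\delta,\varepsilon$ only by constant factors. Alternatively, I would shrink $r$ further: $s2^q/2\le 1/4$ is fixed, but $r$ can be taken arbitrarily small, which keeps the error bounded away from $1/2$. Either way, all constants are absorbed into the $O(\cdot)$.
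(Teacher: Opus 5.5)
Your proposal is correct and is essentially the paper's proof. You fix a constant radius well inside $\bigl(0,\frac{\delta}{2q}(1-s2^q)\bigr)$ (the paper uses $r=0.01\delta/q$, giving error below $0.3$), apply Theorem~\ref{thm:binary-main}, and feed the resulting LDC into an odd-query LDC lower bound. That bound only needs constant advantage over $1/2$, so your fallback about further reducing the error is never needed. The one adjustment is the citation: the paper invokes the Janzer--Manohar bound \cite{janzer2024k}, which gives exactly $k\le O(n^{1-2/q}\log n)$, whereas the paper quotes the bound of \cite{Basuldc} only up to polylogarithmic factors.
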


\subsection{Techniques overview}
In this section we give a high-level overview of our proofs.
For concreteness we focus on the binary setting $\Sigma = \{0,1\}$ and on
decoding message symbols.
We first recall the main ideas from \cite{grigorescu2025relaxed} and then describe our improvement.

\paragraph{Summary of techniques in \cite{grigorescu2025relaxed}.}
The core idea is to transform an RLDC decoder into a \emph{smooth} (nonadaptive) local decoder. Concretely, they construct a decoder $\Dec_{\mathsf{LDC}}$ such that for every $i\in[k]$ and $j\in[n]$,
\[
\Pr\!\big[\Dec_{\mathsf{LDC}}(i)\ \text{queries}\ j\big]\le \frac{1}{\eta n},
\]
for some constant $\eta>0$.

A standard observation due to \cite{katz2000efficiency} is that if a smooth nonadaptive decoder additionally has \emph{perfect completeness}, then it already yields an LDC for the code with comparable parameters.
Thus, their main technical contribution is to start from the RLDC decoding procedure and modify it into a smooth nonadaptive decoder that achieves perfect completeness.

A nonadaptive RLDC decoder for bit $b_i$ of a possibly nonlinear code is specified by a distribution $\mathcal{Q}_i$ over subsets $Q \subseteq [n]$ of size at most $q$ and, for each $Q$ in the support of $\mathcal{Q}_i$, a local rule (truth table) $f_{i,Q} : \{0,1\}^Q \to \{0,1,\bot\}$. On input a received word $y \in \{0,1\}^n$, the decoder samples $Q \sim \mathcal{Q}_i$ and outputs $f_{i,Q}(y|_Q)$.

In the special case where $C$ is \emph{linear} and the decoder has \emph{perfect completeness},
the local rule $f_{i,Q}$ can be viewed as having the following structure.
Associated with each query set $Q$ is a collection of \emph{linear consistency checks} on the
symbols in $Q$ (equivalently, a system of linear constraints that valid restrictions
$(C(b))|_Q$ must satisfy). Given $y|_Q$, the decoder first verifies these linear constraints;
if any check fails, it outputs $\bot$. Otherwise, the
desired message bit $b_i$ is determined as a \emph{linear function} of the queried symbols,
so the decoder outputs that linear combination of $y|_Q$.

For each $i\in[k]$, the codeword coordinates are divided into \emph{heavy} and \emph{light} positions.
For each $j \in [n]$ let
$p_i(j) := \Pr_{Q \sim \mathcal{Q}_i}[j \in Q]$ be the probability that coordinate $j$ is queried
when decoding $b_i$.
Positions with
$p_i(j) > q/(\delta n)$ are declared heavy, the rest are light.
By averaging, there can be at most $\delta n$ heavy positions, so changing all heavy coordinates
introduces at most $\delta n$ Hamming errors.

Based on the decoding constraints, they define \emph{$i$-smoothable} query sets.
Intuitively, $Q$ is called $i$-smoothable if $b_i$ can be recovered using only light positions:
there exists a subset $T \subseteq Q \cap L_i$ such that
\[
    b_i \;=\; \bigoplus_{j \in T} c_j
\]
for every codeword $c = C(b)$. If no such representation exists, $Q$ is called $i$-nonsmoothable.

Then they define
\[
p_{i,\mathrm{good}}
\;=\;
\Pr_{Q\sim \sQ_i}\bigl[\,Q \text{ is smoothable}\,\bigr],
\]
the probability that a random query set is smoothable. Under the assumption $s\le(1-\alpha)2^{-\lfloor q/2\rfloor}$ for a constant $\alpha>0$, their argument shows that $p_{i,\mathrm{good}}=\Omega(1)$. They choose a random codeword in an appropriate subspace and copy its heavy coordinates, changing at most $\delta n$ positions. For every nonsmoothable $Q$, this preserves the light view but makes the local constraint agree with a codeword having the opposite value of $b_i$ with probability bounded below in terms of $q$. Averaging over the corruption and $Q$, the RLDC soundness guarantee bounds $1-p_{i,\mathrm{good}}$.

Finally, by restricting to smoothable queries from $\sQ$, the modified decoder $\Dec_{\mathrm{LDC}}$ remains smooth, and thus yields an LDC.

\paragraph{Smoothability relative to a codeword}
Smoothness of a query distribution remains well defined without linearity, but smoothability can depend on the codeword. As Example~\ref{ex:truthtable} shows, a query set can be smoothable for one codeword but not for another. Consequently, we cannot restrict the query distribution to a single codeword-independent collection of smoothable sets. The decoder we derive is therefore not smooth: we keep the original query distribution.

This is our starting point. We formulate a \emph{codeword-relative} notion of smoothability. Specifically, for a codeword $c = C(b) \in \{0,1\}^n$, we say that a query set $Q$ is \emph{$(i,c)$-smoothable} if the restriction of $c$ to the light coordinates $L(Q)$ uniquely determines the message symbol $b_i$ among all codewords; namely,
\[
    \forall b' \in \{0,1\}^k,\;
    c' := C(b') \text{ with } c'|_{L(Q)} = c|_{L(Q)}
    \;\Longrightarrow\;
    b'_i = b_i.
\]
Equivalently, $Q$ is \emph{$(i,c)$-nonsmoothable} if there exists $b' \in \{0,1\}^k$ such that, letting $c' := C(b')$, we have $c'|_{L(Q)} = c|_{L(Q)}$ but $b'_i \neq b_i$. In other words, the light restriction $c|_{L(Q)}$ admits a completion to a different codeword whose $i$-th message symbol disagrees with $b_i$.

In this setting, two natural questions arise:
\begin{enumerate}
    \item \textbf{Local recognizability:} Can we determine whether $Q$ is $(i,c)$-smoothable using only local information, by reading $c$ on the coordinates in $Q$, rather than requiring global knowledge of the codeword?
    \item \textbf{Local dependence of decoding:} Can the decoder's output on a query set $Q$ be made to depend only on the local view $c|_{Q}$ (and perhaps its internal randomness), rather than on any global properties of the underlying codeword $c$?
\end{enumerate}

For the first question: Yes, provided that the light coordinates are uncorrupted.
By definition, whether a query set $Q$ is $(i,c)$-smoothable depends only on the
restriction $c|_{L(Q)}$. Hence, if the received word $y$ satisfies
$y|_{L(Q)} = c|_{L(Q)}$, then the decoder can correctly determine the
$(i,c)$-smoothability status of $Q$ using only local information.

It remains to argue that light corruption is unlikely when the overall error rate is small. Suppose $y$ differs from $c$ in at most $rn$ positions. For a fixed query distribution over $Q$, the probability that $L(Q)$ intersects the corruption set can be bounded by a standard double-counting argument combined with Markov's inequality:
\[
\Pr\bigl[\,y|_{L(Q)} \neq c|_{L(Q)}\,\bigr]
=
\Pr\bigl[\,L(Q)\cap \mathrm{Err}\neq \emptyset\,\bigr]
\;\le\;
\E\bigl[|L(Q)\cap \mathrm{Err}|\bigr]
\;\le\;
 |\Err| \cdot \frac{q}{\delta n}
\;\le\;
\frac{rq}{\delta},
\]
where $\mathrm{Err} := \{j\in[n] : y_j\neq c_j\}$.
Thus, when the error fraction is sufficiently small, this ``bad'' event has small probability. In the LDC analysis we absorb this probability into the overall decoding error.

For the second question: Yes. Our decoder proceeds under the event that no light error occurs, i.e., $y|_{L(Q)}=c|_{L(Q)}$, and then determines whether $Q$ is smoothable using the criterion above. If $Q$ is $(i,c)$-smoothable, the decoder outputs $b_i$ as a deterministic function of the local view on $Q$, using the corresponding truth-table rule together with the queried values $y|_{Q}$.

If $Q$ is $(i,c)$-nonsmoothable (which may occur in our setting, as argued earlier), our decoder instead outputs a uniformly random symbol from the alphabet. This differs from the approach of~\cite{grigorescu2025relaxed}, where the decoder is designed to never query such nonsmoothable sets in the first place.

In summary, for each message index $i$ and query set $Q$, we derive a randomized local rule $g_{i,Q}$ with inputs in $\{0,1\}^{Q}$ and outputs in $\{0,1\}$ from the original RLDC rule $f_{i,Q}:\{0,1\}^{Q}\to \{0,1\}\cup\{\bot\}$ by projecting through the light coordinates.

Fix $i,Q$, and write $L(Q)\subseteq Q$ for the light positions. For any light
pattern $a\in \{0,1\}^{L(Q)}$, consider all completions $z\in \{0,1\}^{Q}$ with
$z|_{L(Q)}=a$. We declare $a$ to be \emph{good} if, among those completions,
whenever $f_{i,Q}$ outputs a symbol (i.e., not $\bot$) it is always the \emph{same}
symbol $\sigma\in\{0,1\}$, and moreover this $\sigma$ occurs for at least one
completion. Otherwise (either $f_{i,Q}$ produces two different non-$\bot$ outputs,
or it is always $\bot$) we declare $a$ to be \emph{bad}.

Given a local view $\ell\in \{0,1\}^{Q}$, let $a=\ell|_{L(Q)}$. We define
$g_{i,Q}(\ell)$ by:
\begin{itemize}
    \item if $a$ is \emph{good}, output the uniquely determined symbol $\sigma$
    associated with $a$;
    \item if $a$ is \emph{bad}, output a uniformly random bit.
\end{itemize}
Thus, conditioned on the light coordinates being uncorrupted, $g_{i,Q}$ depends
only on the local restriction to $L(Q)$ in the good case, while in the bad case
it deliberately ``gives up'' by outputting fresh randomness. This is exactly the
behavior we exploit in the analysis: good light patterns behave like an LDC-type
local rule, and bad patterns are charged to the overall decoding error.

The remaining part of the proof is to upper bound, for each fixed codeword $c$, the probability of sampling an $(i,c)$-nonsmoothable query set.

To do this, we use a similar resampling argument. We randomly resample all heavy coordinates, and observe that for any fixed $(i,c)$-nonsmoothable query set $Q$, with probability at least $2^{-|H_i\cap Q|}$ the resampling assigns the heavy positions in $H_i\cap Q$ to a configuration that makes the local rule output a non-$\bot$ symbol different from $b_i$. Such an event constitutes a soundness violation for the RLDC decoder. Averaging over the resampling and $Q$, the RLDC soundness guarantee forces the probability of sampling an $(i,c)$-nonsmoothable query set to be small.

\subsection{Outline of the Paper}
The rest of the paper is organized as follows. In Section~\ref{sec:pre}, we fix notation and recall the standard definitions of LDCs, LCCs and their relaxed variants (RLDCs/RLCCs). In Section~\ref{sec:main}, we prove our main structural theorem for perfectly-complete RLDCs (Theorem~\ref{thm:binary-main}). In Section~\ref{sec:lowerbound}, we combine our transformation with known LDC lower bounds to obtain lower bounds for nonadaptive perfectly-complete RLDCs in the small-soundness-error regime.

%% file: Preliminary.tex
\section{Preliminaries}\label{sec:pre}
In this section we fix notation and recall the standard local decoding notions
that will be used throughout the paper.  

\subsection{Basic notation}

We write $\NN$ for the set of positive integers. For $n \in \NN$, we denote $[n] := \{1,2,\dots,n\}$. For $t \in \NN$, we write $\binom{[n]}{t}$ for the collection of all subsets of $[n]$ of size exactly $t$. We use standard asymptotic notation $O(\cdot)$, $\Omega(\cdot)$,
$\Theta(\cdot)$, $o(\cdot)$, and $\omega(\cdot)$ with respect to the
limit $n \to \infty$.
We also use the variants $\widetilde{O}(\cdot)$,
$\widetilde{\Omega}(\cdot)$, and $\widetilde{\Theta}(\cdot)$ to denote
bounds that hold up to polylogarithmic factors in $n$.

Throughout, $\Sigma$ denotes a finite alphabet.
In most of the paper we will be interested in the binary alphabet
$\Sigma = \{0,1\}$. For $n \in \NN$, we write $\Sigma^n$ for the set of strings of length
$n$ over~$\Sigma$, and we write $\Sigma^\ast := \bigcup_{n \ge 0} \Sigma^n$
for the set of all finite strings over~$\Sigma$.
If $x \in \Sigma^n$ is a string of length $n$ and $i \in [n]$ is an
index, we denote the $i$-th symbol of $x$ by $x_i$.

For a subset $S \subseteq [n]$ and a string $x \in \Sigma^n$, we write $x|_S \in \Sigma^{|S|}$
for the restriction of $x$ to the coordinates in $S$, ordered in
increasing order.
If $X \subseteq \Sigma^n$ is a set of strings, we similarly define $X|_S := \{ x|_S : x \in X \}$. For $x,y \in \Sigma^n$, the \emph{Hamming distance} between
$x$ and $y$ is
\[
    \Delta(x,y)
    \;:=\;
    \bigl|\{\, i \in [n] : x_i \neq y_i \,\}\bigr|.
\]

A code of block length $n$ and message length $k$ over $\Sigma$ is a mapping $C : \Sigma^k \to \Sigma^n$. For $b \in \Sigma^k$ we write $c = C(b)$ for the corresponding codeword. When the message and codeword alphabets differ, we write $C:\Lambda^k\to\Sigma^n$; a local message decoder then outputs symbols in $\Lambda$. We use the definitions below with this convention when needed.
The Hamming distance of $C$ is
\[
    d := \min_{b \neq b'} \Delta\bigl(C(b),C(b')\bigr).
\]
We write $\delta_{\mathrm{dist}}(C) := d/n$ for its relative distance. For $C:\Lambda^k\to\Sigma^n$ with $|\Lambda|,|\Sigma|\ge2$, the rate is $R:=k\log|\Lambda|/(n\log|\Sigma|)$; in particular, it is $k/n$ when $\Lambda=\Sigma$.

We use $\Pr[\cdot]$ for probability and $\E[\cdot]$ for expectation with
respect to the relevant underlying randomness, which will be clear
from context. For a random variable $X$ distributed according to a distribution
$\mathcal{D}$ we write $X \sim \mathcal{D}$.


\subsection{Locally decodable/correctable codes and their relaxed variants}

We now formally define locally decodable and locally correctable codes, as well as their relaxed variants.

A $q$-query \emph{local decoder} for message symbols is a randomized oracle algorithm
\[
    \Dec : \Sigma^n \times [k] \to \Sigma
\]
such that, on input an oracle word $y \in \Sigma^n$ and an index $i \in [k]$,
the algorithm $\Dec^y(i)$ queries at most $q$ positions of $y$ and
outputs a symbol in $\Sigma$.

Similarly, a $q$-query \emph{local decoder for codeword symbols} is a randomized
oracle algorithm $\Dec : \Sigma^n \times [n] \to \Sigma$ with the analogous
semantics, where the second argument now ranges over codeword coordinates $u \in [n]$.

When the choice of each query position may depend on the previously observed
answers, we refer to $\Dec$ as \emph{adaptive}; otherwise we call it
\emph{nonadaptive}.

In the definitions below, $q\in\NN$, $\delta,s\in[0,1]$, and, for the relaxed variants, $c\in[0,1]$.

\begin{definition}[Locally decodable codes (LDCs)]
A code $C : \Sigma^k \to \Sigma^n$ is a $(q,\delta,s)$-\emph{locally
decodable code} if there exists a $q$-query local decoder
$\Dec : \Sigma^n \times [k] \to \Sigma$ such that for every
message $b \in \Sigma^k$, every index $i \in [k]$, and every received
word $y \in \Sigma^n$ with $\Delta(y,C(b)) \le \delta n$,
\[
    \Pr\bigl[\Dec^y(i) \neq b_i\bigr] \;\le\; s.
\]
\end{definition}

\begin{definition}[Locally correctable codes (LCCs)]
A code $C : \Sigma^k \to \Sigma^n$ is a $(q,\delta,s)$-\emph{locally
correctable code} if there exists a $q$-query local decoder
$\Dec : \Sigma^n \times [n] \to \Sigma$ such that for every
message $b \in \Sigma^k$, every coordinate $u \in [n]$, and every
received word $y \in \Sigma^n$ with $\Delta(y,C(b)) \le \delta n$,
\[
    \Pr\bigl[\Dec^y(u) \neq C(b)_u\bigr] \;\le\; s.
\]
\end{definition}

Relaxed variants allow the local procedure to output a special symbol
$\bot$ when the received word is corrupted, as long as it is unlikely to
output an incorrect value when the word is close to a codeword.
In the relaxed setting the decoder has output alphabet $\Sigma \cup \{\bot\}$.
On the other hand, we do not want the decoder to always output $\bot$
even when the input is an uncorrupted codeword, as this would render the
definition meaningless.
We therefore add a completeness condition requiring that, on a valid
codeword, the decoder outputs the correct symbol (rather than $\bot$)
with sufficiently large probability.

\begin{definition}[Relaxed LDCs (RLDCs)]
A code $C : \Sigma^k \to \Sigma^n$ is a $(q,\delta,c,s)$-\emph{relaxed
locally decodable code} if there exists a $q$-query local decoder $\Dec : \Sigma^n \times [k] \to \Sigma \cup \{\bot\}$
such that for every $b \in \Sigma^k$, every $i \in [k]$, and every
$y \in \Sigma^n$ with $\Delta(y,C(b)) \le \delta n$:
\begin{enumerate}
    \item (Completeness) For the uncorrupted codeword,
    \[
        \Pr\bigl[\Dec^{C(b)}(i) = b_i\bigr] \;\ge\; c.
    \]
    \item (Soundness) On corrupted words,
    \[
        \Pr\bigl[\Dec^y(i) \notin \{b_i,\bot\}\bigr] \;\le\; s.
    \]
\end{enumerate}
\end{definition}

\begin{definition}[Relaxed LCCs (RLCCs)]
A code $C : \Sigma^k \to \Sigma^n$ is a $(q,\delta,c,s)$-\emph{relaxed
locally correctable code} if there exists a $q$-query local decoder $\Dec : \Sigma^n \times [n] \to \Sigma \cup \{\bot\}$
such that for every $b \in \Sigma^k$, every $u \in [n]$, and every
$y \in \Sigma^n$ with $\Delta(y,C(b)) \le \delta n$:
\begin{enumerate}
    \item (Completeness) For the uncorrupted codeword,
    \[
        \Pr\bigl[\Dec^{C(b)}(u) = C(b)_u\bigr] \;\ge\; c.
    \]
    \item (Soundness) On corrupted words,
    \[
        \Pr\bigl[\Dec^y(u) \notin \{C(b)_u,\bot\}\bigr] \;\le\; s.
    \]
\end{enumerate}
\end{definition}

When $c = 1$ we say that the relaxed decoder has \emph{perfect completeness}.
Sometimes the term ``completeness'' is also used for LDCs/LCCs themselves, but in general we do not impose any
explicit completeness requirement in the definition of an LDC/LCC, and in this
paper our LDC/LCC definitions are purely in terms of the error parameter $s$.

When the relaxed decoder has perfect completeness ($c = 1$), one can derive a few
useful structural properties that we will use without proof; see, for example,
\cite{grigorescu2025relaxed} for detailed arguments.

First, an RLDC/RLCC with perfect completeness can be assumed to have a \emph{canonical} behavior. Informally, for an index $i \in [k]$ (or a coordinate $u \in [n]$ in the RLCC case), once the decoder has fixed its query set $Q$ and observed the local view $y|_Q$, it proceeds as follows. Perfect completeness ensures that the corresponding symbol is unique whenever such a codeword exists.
\begin{enumerate}
    \item If there exists a codeword $c = C(b)$ such that $c|_Q = y|_Q$, output the unique corresponding symbol $b_i$ (or $c_u$ in the RLCC setting).
    \item If no such codeword exists, output $\bot$.
\end{enumerate}

\begin{proposition}[Canonical form under perfect completeness]
\label{prop:canonical}
Let $C : \Sigma^k \to \Sigma^n$ be a $(q,\delta,1,s)$-RLDC with decoder
$\Dec_1 : \Sigma^n \times [k] \to \Sigma \cup \{\bot\}$.
Then there exists a decoder $\Dec_2$ such that $C$ is still a
$(q,\delta,1,s)$-RLDC with respect to $\Dec_2$, and $\Dec_2$ is canonical in the
sense described above.
Moreover, if $\Dec_1$ is nonadaptive, then $\Dec_2$ can also be chosen
nonadaptive.
An analogous statement holds for RLCCs, with $i \in [k]$ replaced by
$u \in [n]$ and $b_i$ replaced by $C(b)_u$.
\end{proposition}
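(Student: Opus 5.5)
We construct $\Dec_2$ directly from $\Dec_1$ by keeping the query behaviour of $\Dec_1$ and replacing only its output rule. Concretely, $\Dec_2$ on input $i$ runs $\Dec_1$'s query-selection procedure with the same randomness. In the nonadaptive case this means sampling the same query set $Q\sim\sQ_i$; in the adaptive case it means following the same adaptive query tree. It then looks at the local view $y|_Q$. If some codeword $c=C(b)$ satisfies $c|_Q=y|_Q$, $\Dec_2$ outputs $b_i$; otherwise it outputs $\bot$. The query-selection procedure is untouched, so $\Dec_2$ makes at most $q$ queries and is nonadaptive whenever $\Dec_1$ is.

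The first step is to show that this rule is well defined, i.e.\ the symbol $b_i$ is the same for every codeword consistent with $y|_Q$. Suppose codewords $C(b)$ and $C(b')$ both agree with $y|_Q$ but $b_i\ne b'_i$. Since $Q$ was produced with positive probability on the oracle $y$, it is also produced with the same probability on the oracle $C(b)$: in the nonadaptive case because $\sQ_i$ does not depend on the oracle, and in the adaptive case because the answers along that run are identical. Perfect completeness says $\Dec_1^{C(b)}(i)=b_i$ with probability $1$, so for every randomness leading to the view $(Q,y|_Q)$ the output of $\Dec_1$ on that view must be $b_i$. The same argument applied to $C(b')$ forces that output to be $b'_i$, a contradiction. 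In fact this argument shows slightly more, and we record it: whenever the view $(Q,y|_Q)$ is consistent with some codeword $C(b)$, $\Dec_1$ outputs exactly $b_i$ on that view.

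Completeness of $\Dec_2$ is then immediate, since on the oracle $C(b)$ every view is consistent with $C(b)$ itself, so $\Dec_2$ always outputs $b_i$.

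For soundness, fix $y$ with $\Delta(y,C(b))\le\delta n$ and couple $\Dec_1$ and $\Dec_2$ on the same randomness, hence the same view. If the view is inconsistent with every codeword, $\Dec_2$ outputs $\bot$, which is never an error. If the view is consistent with some $C(b')$, then $\Dec_2$ outputs $b'_i$, and by the strengthened claim $\Dec_1$ also outputs $b'_i$ on that view. Hence the event $\{\Dec_2^y(i)\notin\{b_i,\bot\}\}$ is contained in the event $\{\Dec_1^y(i)\notin\{b_i,\bot\}\}$, and the soundness error is still at most $s$.

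The main delicate point is the adaptive case: one must check that a view reached on $y$ is reached with the same probability on any consistent codeword. This holds because the decoder's choices depend only on its randomness and the answers it has seen, and those answers coincide. The RLCC statement follows verbatim with $b_i$ replaced by $C(b)_u$.
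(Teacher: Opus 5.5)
Your proposal is correct and takes the same route as the paper: wrap the original decoder in the canonical post-processing, note that completeness is unaffected, and note that the only changes replace outputs by $\bot$, so soundness cannot get worse. You also supply the detail the paper leaves implicit (it defers to \cite{grigorescu2025relaxed}): by perfect completeness, $\Dec_1$ already outputs the canonical symbol on every reachable view consistent with a codeword, and this is exactly what makes your coupling argument for soundness work, in the adaptive case as well.
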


Starting from any perfect-completeness RLDC/RLCC, one can thus wrap the original
decoder by this canonical post-processing.
On true codewords, the output is unchanged, so completeness remains $1$, and
replacing some noncanonical outputs by $\bot$ can only improve (or leave
unchanged) the relaxed soundness parameter.

There are standard soundness amplification techniques for LDCs and RLDCs
based on independent repetition of the decoder.

\begin{proposition}[Soundness amplification for LDCs]
Let $C : \Sigma^k \to \Sigma^n$ be a $(q,\delta,s)$-LDC with $0\le s\le 1/2$. Then, for every integer $t \ge 1$, $C$ is also a $\bigl(tq,\delta,\bigl(2\sqrt{s(1-s)}\bigr)^t\bigr)$-LDC. The same statement holds for LCCs.
\end{proposition}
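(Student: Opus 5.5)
The plan is to run the original decoder $t$ times independently and take a strict-majority vote. Fix a message $b$, an index $i\in[k]$, and a received word $y$ with $\Delta(y,C(b))\le \delta n$. The new decoder $\Dec'$ runs $\Dec^y(i)$ $t$ times with fresh independent randomness, getting outputs $\sigma_1,\dots,\sigma_t\in\Sigma$. If some symbol occurs in strictly more than $t/2$ of the runs, $\Dec'$ outputs it; otherwise it outputs an arbitrary fixed symbol.

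\begin{itemize}
\item \emph{Queries.} Each run makes at most $q$ queries, so $\Dec'$ makes at most $tq$ queries. If $\Dec$ is nonadaptive, so is $\Dec'$, since it can issue the union of the $t$ independently sampled query sets at once.
\item \emph{Why a strict majority works over any alphabet.} If more than $t/2$ runs output $b_i$, then no other symbol can also reach more than $t/2$ runs. So $\Dec'$ outputs $b_i$ in that case. Hence $\Dec'$ errs only if the number $X$ of wrong runs satisfies $X\ge t/2$.
\item \emph{Independence.} Since $y$ is fixed, the indicators of the events $\{\sigma_j\neq b_i\}$ are independent Bernoulli variables. Each has success probability $p_j=p\le s$, all equal because the runs are identical copies.
\end{itemize}

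Next, bound $\Pr[X\ge t/2]$ by an exponential-moment (Chernoff-type) argument. If $s=0$ every run is correct and the claim is trivial, so assume $0<s\le 1/2$. For any $\lambda\ge 1$, Markov's inequality gives
\[
\Pr[X\ge t/2]\le \frac{\E[\lambda^X]}{\lambda^{t/2}}=\frac{\prod_{j=1}^t\bigl((1-p_j)+p_j\lambda\bigr)}{\lambda^{t/2}}\le \frac{\bigl((1-s)+s\lambda\bigr)^t}{\lambda^{t/2}}.
\]
The last inequality holds because $(1-p)+p\lambda=1+p(\lambda-1)$ is nondecreasing in $p$ when $\lambda\ge 1$. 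Now choose $\lambda=\sqrt{(1-s)/s}$; this is at least $1$ exactly because $s\le 1/2$. With this choice $(1-s)+s\lambda=(1-s)+\sqrt{s(1-s)}$ and $\lambda^{1/2}=\bigl((1-s)/s\bigr)^{1/4}$. The per-run factor is therefore
\[
\frac{(1-s)+\sqrt{s(1-s)}}{\bigl((1-s)/s\bigr)^{1/4}}=\sqrt{s(1-s)}\,\Bigl(\sqrt{\tfrac{1-s}{s}}\Bigr)^{1/2}\,\frac{\sqrt{1-s}+\sqrt{s}}{\sqrt{1-s}}\cdot\Bigl(\tfrac{s}{1-s}\Bigr)^{1/4}\cdots,
\]
which simplifies. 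The cleanest route is to verify directly that
\[
\min_{\lambda\ge 1}\frac{(1-s)+s\lambda}{\sqrt{\lambda}}=2\sqrt{s(1-s)},
\]
attained at $\lambda=(1-s)/s$. Using that minimizer $\lambda=(1-s)/s$ in place of the square root above gives $\Pr[X\ge t/2]\le (2\sqrt{s(1-s)})^t$, which is the claimed error bound.

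For LCCs the argument is verbatim, with $i\in[k]$ replaced by $u\in[n]$ and $b_i$ replaced by $C(b)_u$.

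The only delicate points are the two places where $s\le 1/2$ is used. First, it guarantees the minimizer satisfies $\lambda\ge 1$, which is what allows replacing each $p_j$ by the upper bound $s$ through monotonicity. Second, the strict-majority rule is what makes the argument valid over a non-binary alphabet; no other step is an obstacle.
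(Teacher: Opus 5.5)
Your proposal is correct. The paper gives no proof of this proposition: it is recalled in the preliminaries as a standard fact about independent repetition, so there is nothing to compare against. Your strict-majority vote followed by the exponential-moment bound is the standard argument. You correctly handle where $s\le 1/2$ enters: it makes the optimal $\lambda=(1-s)/s$ at least $1$, which justifies both the Markov step and the monotone replacement of $p$ by $s$.

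One thing to clean up: delete the first attempt with $\lambda=\sqrt{(1-s)/s}$ and the display trailing off in ``$\cdots$''. That display is not a valid simplification. That choice of $\lambda$ only gives the per-run factor $(s(1-s))^{1/4}\bigl(\sqrt{s}+\sqrt{1-s}\bigr)$, which by AM--GM is at least $2\sqrt{s(1-s)}$, so it does not give the claimed bound. Just substitute $\lambda=(1-s)/s$ directly:
\[
\frac{(1-s)+s\lambda}{\sqrt{\lambda}}=(1-s)\sqrt{\tfrac{s}{1-s}}+s\sqrt{\tfrac{1-s}{s}}=2\sqrt{s(1-s)}.
\]

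You can also avoid the optimization with a direct counting argument. For $p\le 1/2$ and $k\ge t/2$, we have $p^k(1-p)^{t-k}\le (p(1-p))^{t/2}$. Hence
\[
\Pr[X\ge t/2]\le \sum_{k\ge t/2}\binom{t}{k}(p(1-p))^{t/2}\le 2^t(p(1-p))^{t/2}\le \bigl(2\sqrt{s(1-s)}\bigr)^t,
\]
where the last step uses that $p(1-p)$ is nondecreasing on $[0,1/2]$. This is your Chernoff bound with $\lambda=(1-s)/s$ in elementary form.
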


In the relaxed setting, perfect completeness allows an even simpler
amplification of the soundness by independent repetition.

\begin{proposition}[Soundness amplification for RLDCs {\cite[Observation 1.9]{grigorescu2025relaxed}}]
\label{prop:amplification}
Let $C : \Sigma^k \to \Sigma^n$ be a $(q,\delta,1,s)$-RLDC.
Then, for every integer $t \ge 1$, $C$ is also a $(tq,\delta,1,s^t)$-RLDC.
The same statement holds for RLCCs.
\end{proposition}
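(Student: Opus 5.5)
The plan is to amplify soundness by running the given decoder $t$ times independently and outputting a symbol only when all $t$ runs agree on it. Let $\Dec$ be the $q$-query relaxed decoder witnessing that $C$ is a $(q,\delta,1,s)$-RLDC. Define $\Dec_t$ as follows. On input $i\in[k]$ and oracle $y$, it runs $t$ independent copies $\Dec^y(i)$, each with fresh randomness, obtaining outputs $o_1,\dots,o_t\in\Sigma\cup\{\bot\}$. If $o_1=\dots=o_t=\sigma$ for some $\sigma\in\Sigma$, it outputs $\sigma$; otherwise it outputs $\bot$. Then $\Dec_t$ makes at most $tq$ queries. Moreover, if $\Dec$ is nonadaptive then so is $\Dec_t$, since its query set is just the union of $t$ independently sampled query sets.

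\emph{Completeness.} Fix $b\in\Sigma^k$ and $i\in[k]$. On the uncorrupted codeword $C(b)$, perfect completeness of $\Dec$ gives $\Pr[o_j=b_i]=1$ for each $j$. Hence, with probability $1$, all $t$ runs output $b_i$ and $\Dec_t$ outputs $b_i$. So completeness remains $1$.

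\emph{Soundness.} Fix $y$ with $\Delta(y,C(b))\le\delta n$. By construction, $\Dec_t^y(i)\notin\{b_i,\bot\}$ forces $o_1=\dots=o_t=\sigma$ for some $\sigma\neq b_i$, $\sigma\in\Sigma$. In particular every $o_j\notin\{b_i,\bot\}$. The runs are independent and each satisfies $\Pr[o_j\notin\{b_i,\bot\}]\le s$, so
\[
\Pr\bigl[\Dec_t^y(i)\notin\{b_i,\bot\}\bigr]\le\prod_{j=1}^t\Pr\bigl[o_j\notin\{b_i,\bot\}\bigr]\le s^t .
\]

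The RLCC case is identical, with $i$ replaced by a coordinate $u\in[n]$ and $b_i$ replaced by $C(b)_u$. The only step needing care is the decision rule when the runs disagree. Outputting, say, a majority or the first non-$\bot$ value would break the product bound, because then a single bad run could produce a wrong answer. Requiring unanimity among non-$\bot$ outputs is exactly what perfect completeness permits at no cost: on a true codeword the runs never disagree, so refusing on disagreement never hurts completeness.
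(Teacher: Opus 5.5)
Your proof is correct and matches the paper's instantiation of this proposition: run $\Dec^y(i)$ independently $t$ times, output $\sigma$ only if all $t$ runs output $\sigma$, and output $\bot$ otherwise. Perfect completeness carries over because every run outputs $b_i$ on a true codeword, and soundness is at most $s^t$ because a wrong output requires every independent run to output a symbol outside $\{b_i,\bot\}$.
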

One convenient instantiation of Proposition~\ref{prop:amplification} is the
following: on input $(y,i)$, define a new decoder that runs $\Dec^y(i)$
independently $t$ times, and outputs a symbol $\sigma \in \Sigma$ only if all
$t$ invocations output $\sigma$; otherwise it outputs $\bot$. The same amplification construction applies verbatim to RLCCs.

%% file: Results.tex
\section{From RLDCs to LDCs in the Small-Soundness-Error Regime}
\label{sec:main}

In this section we prove our main result, Theorem~\ref{thm:binary-main}, which shows that a nonadaptive RLDC with perfect completeness and sufficiently small soundness error yields an LDC with comparable parameters. We focus on the RLDC/LDC setting; the RLCC/LCC analogue follows by the same argument after replacing the decoded message symbol $b_i$ by the target codeword symbol $C(b)_u$. We prove the result over an arbitrary finite alphabet $\Sigma$; the binary statement is obtained by setting $\Sigma=\{0,1\}$.

\begin{theorem}[Theorem \ref{thm:binary-main} for any alphabet]\label{thm:derived_ldc}
Let $\Sigma$ be a finite alphabet with $|\Sigma|\ge2$, let $q\in\mathbb N$, and let $\delta\in(0,1]$. Suppose $C : \Sigma^{k} \to \Sigma^{n}$ is a $(q,\delta,1,s)$-RLDC with a nonadaptive decoder, where $s < |\Sigma|^{-q}$. Then, for any error radius $0<r< \frac{\delta(|\Sigma|-1)}{q|\Sigma|} \bigl(1-s|\Sigma|^q\bigr)$, the code $C$ is also a
\[
    \bigl(q, r,\, \varepsilon\bigr)\text{-LDC}
\]
for $\varepsilon = s |\Sigma|^{q}\Bigl(1 - \frac{1}{|\Sigma|}\Bigr) + \frac{rq}{\delta}$.
\end{theorem}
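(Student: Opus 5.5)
We begin by passing to a canonical nonadaptive decoder via Proposition~\ref{prop:canonical}. For each $i\in[k]$ this decoder is given by a distribution $\sQ_i$ over query sets $Q$ with $|Q|\le q$. On view $y|_Q$ it outputs $b_i$ if some codeword $C(b)$ agrees with $y$ on $Q$, and $\bot$ otherwise. Let $p_i(j)=\Pr_{Q\sim\sQ_i}[j\in Q]$. Since $\sum_j p_i(j)\le q$, we declare $j$ heavy when $p_i(j)>q/(\delta n)$, so that $|H_i|\le\delta n$. Write $L(Q)=Q\setminus H_i$.

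\textbf{The new decoder.} $\Dec'$ samples $Q\sim\sQ_i$, reads $y|_Q$, and sets $a=y|_{L(Q)}$. If every codeword $C(b')$ with $C(b')|_{L(Q)}=a$ has the same $b'_i=\sigma$, and at least one such codeword exists, it outputs $\sigma$. Otherwise it outputs a uniformly random symbol of $\Sigma$. This uses the same $q$ queries and is nonadaptive.

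\textbf{Error decomposition.} Fix $b$, let $c=C(b)$, and let $y$ satisfy $\Delta(y,c)\le rn$ with error set $\Err$.

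\begin{itemize}
\item The event $L(Q)\cap\Err\neq\emptyset$ has probability at most $\sum_{j\in\Err\setminus H_i}p_i(j)\le rq/\delta$, by a union bound.
\item Otherwise $a=c|_{L(Q)}$. If $Q$ is $(i,c)$-smoothable, the output is exactly $b_i$, because $c$ itself is a valid completion.
\item If $Q$ is $(i,c)$-nonsmoothable, the output is wrong with probability exactly $1-1/|\Sigma|$.
\end{itemize}

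Hence the error is at most $rq/\delta+(1-1/|\Sigma|)\,\pi$, where $\pi=\Pr_{Q}[Q\text{ is }(i,c)\text{-nonsmoothable}]$. It therefore suffices to show $\pi\le s|\Sigma|^q$.

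\textbf{Bounding $\pi$ (the main step).} Form $y^{\rho}$ by replacing $c|_{H_i}$ with a uniformly random $\rho\in\Sigma^{H_i}$. Since $|H_i|\le\delta n$, we have $\Delta(y^\rho,c)\le\delta n$, so RLDC soundness gives $\Pr_{\rho,Q}[\Dec^{y^\rho}(i)\notin\{b_i,\bot\}]\le s$.

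Now fix a nonadaptive $(i,c)$-nonsmoothable $Q$. There is a codeword $c'=C(b')$ with $c'|_{L(Q)}=c|_{L(Q)}$ and $b'_i\neq b_i$. With probability at least $|\Sigma|^{-|H_i\cap Q|}\ge|\Sigma|^{-q}$ we get $\rho|_{H_i\cap Q}=c'|_{H_i\cap Q}$. In that case $y^\rho|_Q=c'|_Q$, and the canonical decoder outputs $b'_i\notin\{b_i,\bot\}$. Perfect completeness makes this output unique and well defined.

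Because $\rho$ is independent of $Q$, averaging over $Q$ gives $\pi|\Sigma|^{-q}\le s$, that is, $\pi\le s|\Sigma|^q$. Substituting into the decomposition yields the claimed $\varepsilon$.

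The hypothesis on $r$ is only needed to make $\varepsilon<1-1/|\Sigma|$, so that the result is nontrivial. It plays no role in the argument itself, which holds for all $r\le\delta$, since then $rn\le\delta n$.

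The main subtlety is making sure the resampling argument is compatible with nonadaptivity. It is, because $Q$ is drawn independently of the received word, so $\rho$ and $Q$ are independent and the averaging step is valid. The RLCC version follows verbatim with $b_i$ replaced by $C(b)_u$.
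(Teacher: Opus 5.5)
Your proposal is correct and follows the paper's proof essentially step for step: the same canonical decoder, the same heavy/light split at threshold $q/(\delta n)$, the same light-pattern rule with a uniform guess on bad patterns, the same $rq/\delta$ bound on light corruption, and the same heavy-coordinate resampling argument giving $\Pr[Q\in\NS_{i,c}]\le s|\Sigma|^q$. Your rule, phrased via the codewords consistent with $y|_{L(Q)}$, coincides with the paper's truth-table good/bad definition of $g_{i,Q}$ under canonical form, and your remark that the radius hypothesis is exactly the condition $\varepsilon<1-1/|\Sigma|$ is also right.
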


\begin{corollary}[Mixed-alphabet form]
\label{cor:mixed-alphabet-derived-ldc}
Let $\Lambda$ and $\Sigma$ be finite alphabets with $|\Lambda|,|\Sigma|\ge2$, let $q\in\mathbb N$ and $\delta\in(0,1]$, and let $C:\Lambda^k\to\Sigma^n$ be a nonadaptive $(q,\delta,1,s)$-RLDC, where $s<|\Sigma|^{-q}$. Here the decoder has type
\[
    \Dec:\Sigma^n\times[k]\to\Lambda\cup\{\bot\}.
\]
Then, for every
\[
    0<r<
    \frac{\delta}{q}
    \left(1-\frac1{|\Lambda|}\right)
    \left(1-s|\Sigma|^q\right),
\]
the code is a $(q,r,\varepsilon)$-LDC, where
\[
    \varepsilon
    =
    s|\Sigma|^q\left(1-\frac1{|\Lambda|}\right)
    +\frac{rq}{\delta}.
\]
\end{corollary}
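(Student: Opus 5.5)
The statement to prove is Corollary~\ref{cor:mixed-alphabet-derived-ldc}. I would prove it by rerunning the argument for Theorem~\ref{thm:derived_ldc} with the two alphabets kept separate. Queries and corruptions live in $\Sigma^n$; decoded symbols and the random fallback output live in $\Lambda$.

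\textbf{Step 1: normalize the decoder and split coordinates.} Using Proposition~\ref{prop:canonical}, I assume the nonadaptive decoder is canonical. For a query set $Q$ and view $\ell\in\Sigma^Q$, it outputs $b_i$ if some codeword agrees with $\ell$ on $Q$, and $\bot$ otherwise. For each $i$, call $j$ heavy if $p_i(j)>q/(\delta n)$. There are at most $\delta n$ heavy positions, because $\sum_j p_i(j)\le q$. Write $L(Q)$ for the light coordinates of $Q$.

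\textbf{Step 2: define the new decoder.} Given a query set $Q$, a light pattern $a\in\Sigma^{L(Q)}$ is called good if exactly one symbol $\sigma\in\Lambda$ arises as a non-$\bot$ output over all completions $z\in\Sigma^Q$ of $a$. The new decoder samples $Q\sim\sQ_i$, reads $\ell=y|_Q$, and acts on $a=\ell|_{L(Q)}$:
\begin{itemize}
\item if $a$ is good, it outputs the unique symbol $\sigma$;
\item otherwise, it outputs a uniformly random element of $\Lambda$.
\end{itemize}

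\textbf{Step 3: error analysis.} Fix $b$, set $c=C(b)$, and let $y$ satisfy $\Delta(y,c)\le rn$. The light coordinates are corrupted with probability at most $rq/\delta$, by the double-counting bound from the overview. On the complementary event $a=c|_{L(Q)}$, and $c|_Q$ is a completion with output $b_i$, by perfect completeness. So the decoder errs only if $a$ is bad, which is exactly the event that $Q$ is $(i,c)$-nonsmoothable. In that case it errs with probability $1-1/|\Lambda|$. Letting $p_{\mathrm{bad}}$ be the probability that $Q$ is $(i,c)$-nonsmoothable, the error is therefore at most $rq/\delta + (1-1/|\Lambda|)\,p_{\mathrm{bad}}$.

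\textbf{Step 4: resampling bound, the main obstacle.} It remains to show $p_{\mathrm{bad}}\le s|\Sigma|^q$. Let $y'$ agree with $c$ on the light coordinates and be uniform in $\Sigma$ on each heavy coordinate. Then $\Delta(y',c)\le\delta n$, so RLDC soundness applies to every such $y'$. For a nonsmoothable $Q$, there is a completion $z$ of $c|_{L(Q)}$ that agrees with some codeword $C(b')$ on $Q$ and has $b'_i\ne b_i$. The resampling hits $z$ on $H_i\cap Q$ with probability at least $|\Sigma|^{-q}$, and the canonical decoder then outputs $b'_i\notin\{b_i,\bot\}$. Averaging over $Q$ and the resampling gives
\[
s\ \ge\ \E_{y'}\Pr\bigl[\Dec^{y'}(i)\notin\{b_i,\bot\}\bigr]\ \ge\ p_{\mathrm{bad}}\,|\Sigma|^{-q}.
\]
The subtle point is that the light restriction of $c$ is fixed deterministically while only heavy symbols are resampled, so the target completion $z$ depends only on $Q$. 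This makes the averaging valid.

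\textbf{Conclusion.} Combining Steps 3 and 4 gives error at most $\varepsilon=s|\Sigma|^q(1-1/|\Lambda|)+rq/\delta$. The hypothesis on $r$ is exactly the condition $\varepsilon<1-1/|\Lambda|$, meaning the decoder beats random guessing. Setting $\Lambda=\Sigma$ recovers Theorem~\ref{thm:derived_ldc}.
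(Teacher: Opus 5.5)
Your proposal is correct and follows essentially the same route as the paper. You canonicalize the decoder, split coordinates into heavy and light, and answer good light patterns with their forced symbol and bad ones uniformly from $\Lambda$. You then bound light corruption by $rq/\delta$ via Markov, and bound the nonsmoothable mass by $s|\Sigma|^q$ by uniformly resampling the heavy coordinates over $\Sigma$. Your explicit check that the radius hypothesis is exactly the condition $\varepsilon<1-1/|\Lambda|$ is a useful addition that the paper leaves implicit.
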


\begin{proof}
Canonicalization is unchanged: a local view consistent with codewords determines a unique symbol of $\Lambda$. Heavy-coordinate resampling still uses the codeword alphabet $\Sigma$, while a bad light pattern is answered uniformly from $\Lambda$. Applying the construction in the proof below gives a decoder $\Dec'$ satisfying
\[
\Pr[\Dec'^y(i)=b_i]
\ge \frac1{|\Lambda|}
   +\left(1-\frac1{|\Lambda|}\right)(1-s|\Sigma|^q)
   -\frac{rq}{\delta}
=1-\varepsilon,
\]
which proves the claim.
\end{proof}

Henceforth, we assume that the RLDC decoder is nonadaptive. This assumption is part of the theorem statement. Since the decoder has perfect completeness, we may also pass to canonical form using Proposition~\ref{prop:canonical}, which preserves nonadaptivity and the parameters.

\begin{remark}
The nonadaptivity assumption in Theorem~\ref{thm:derived_ldc} is essential for our proof. The argument uses a fixed distribution $\sQ_i$ over query sets and a fixed truth table $f_{i,Q}$ for each sampled query set.
\end{remark}

By Proposition~\ref{prop:canonical}, because the decoder has perfect completeness, we may assume without loss of generality that this nonadaptive decoder is canonical.

\paragraph{Characterization of RLDC decoders}
Before proving our main theorem, we introduce some additional notation
and give a more precise characterization of RLDC decoders.

Let $C : \Sigma^k \to \Sigma^n$ be a code over a finite alphabet
$\Sigma$, and fix an index $i \in [k]$.
Since the decoder is assumed nonadaptive and has perfect completeness, we may pass to canonical form using Proposition~\ref{prop:canonical}. An RLDC decoder for the $i$-th message symbol is completely specified by:
\begin{itemize}
    \item a distribution $\sQ_i$ over subsets $Q \subseteq [n]$ of size at most $q$, and
    \item for each $Q$ in the support of $\sQ_i$, a local rule
          $f_{i,Q} : \Sigma^Q \to \Sigma \cup \{\bot\}$.
\end{itemize}
We will refer to the function $f_{i,Q}$ as the \emph{truth table} of the
decoder for index $i$ on the query set $Q$: for every local view
$z \in \Sigma^Q$, the value $f_{i,Q}(z)$ is exactly the output that the
canonical decoder would produce when its sampled query set is $Q$ and the observed local view is $z$.

Given a received word $y \in \Sigma^n$, the decoder samples
$Q \sim \sQ_i$ and outputs $f_{i,Q}(y|_Q)$.
In particular, the behavior of the decoder on index $i$ is completely
captured by the query distribution $\sQ_i$ together with the family of
truth tables $\{f_{i,Q}\}$.
We denote the support of $\sQ_i$ by $\supp(\sQ_i) \subseteq \{Q\subseteq[n]: |Q|\le q\}$. For each coordinate $j \in [n]$, we let
\[
    p_i(j) := \Pr_{Q \sim \sQ_i}[\, j \in Q \,]
\]
denote the probability that $j$ is queried when decoding index $i$.

\paragraph{\textbf{Heavy and light coordinates}}
For each $i \in [k]$, we partition the coordinates in $[n]$ into \emph{heavy}
and \emph{light} positions. We define
\[
    H_i := \{\, j \in [n] : p_i(j) > q / (\delta n) \,\},
    \qquad
    L_i := [n] \setminus H_i.
\]
For a query set $Q \subseteq [n]$, we write
\[
    H_i(Q) := Q \cap H_i,
    \qquad
    L_i(Q) := Q \cap L_i.
\]
When the context is clear and we are decoding $b_i$, we may omit the subscript $i$ in the notation and simply write $\sQ$, $p(j)$, $H$, $L$, $H(Q)$, and $L(Q)$.
By the definition of heavy coordinates, we can easily bound the size of
$H$ by $|H| \le \delta n$ since
\[
    |H| \cdot \frac{q}{\delta n}
    \;\le\;
    \sum_{j \in H} p(j)
    \;\le\;
    \sum_{j=1}^n p(j)
    \;\le\;
    q.
\]

Next we introduce the notion of smooth queries.
Our definition differs from that of \cite{grigorescu2025relaxed}:
there, the distribution $\sQ_i$ is directly partitioned into a
codeword-independent smoothable part and nonsmoothable part, whereas in
our setting the smoothability of a query set $Q$ depends not only on the
decoded index $i$ but also on the particular codeword $c$.
Later we will explain how, despite this dependence on $c$, one can
decide whether $Q$ is smoothable using only the local view on the light
coordinates and the truth table, without reading the entire received
word.

\begin{definition}[$(i,c)$-smoothable query sets]
Fix a message $b \in \Sigma^k$ and its codeword $c = C(b) \in \Sigma^n$.
We say that a query set $Q$ is \emph{$(i,c)$-smoothable} if the light
restriction $c|_{L(Q)}$ uniquely determines the symbol $b_i$ among all
codewords, that is,
\[
    \forall b' \in \Sigma^k,\qquad
    C(b')|_{L(Q)} = c|_{L(Q)}
    \;\Longrightarrow\;
    b'_i = b_i.
\]
Equivalently, $Q$ is \emph{$(i,c)$-nonsmoothable} if there exists a
message $b' \in \Sigma^k$ with codeword $c' = C(b')$ such that
\[
    c'|_{L(Q)} = c|_{L(Q)}
    \quad\text{and}\quad
    b'_i \neq b_i.
\]
\end{definition}
Let $\NS_{i,c}\subseteq\supp(\sQ_i)$ denote the family of $(i,c)$-nonsmoothable query sets.

Next, we illustrate the definition of $(i,c)$-smoothability on a simple
binary example, and explain why the dependence on the codeword $c$ is
essential for handling nonlinear codes.

\begin{example}\label{ex:truthtable}
For this example we take $\Sigma = \{0,1\}$. Fix an index $i \in [k]$ and a query set
\[
    Q = \{j_1,j_2,j_3\},
    \qquad\text{with}\qquad
    j_1 \in H_i,\;\; j_2,j_3 \in L_i.
\]
Consider the local rule $f_{i,Q} : \{0,1\}^3 \to \{0,1,\bot\}$ given by the following truth table:
\[
\begin{array}{c|c}
z_1z_2z_3 & f_{i,Q}(z_1,z_2,z_3)\\\hline
000 & 1\\
001 & 0\\
010 & 1\\
011 & \bot\\
100 & 0\\
101 & 0\\
110 & \bot\\
111 & \bot
\end{array}
\]

Now fix a codeword $c = C(b)$ and examine the four possible patterns of the light restriction $c|_{\{j_2,j_3\}}$.
\begin{enumerate}
    \item $c|_{\{j_2,j_3\}} = 00$. Since $f_{i,Q}(0,0,0) = 1$ but $f_{i,Q}(1,0,0) = 0$, the decoder's output depends on the heavy coordinate $j_1$ even after fixing the light coordinates. Thus $Q$ is \emph{$(i,c)$-nonsmoothable} when $c|_{\{j_2,j_3\}} = 00$.
    \item $c|_{\{j_2,j_3\}} = 01$. Here $f_{i,Q}(0,0,1) = f_{i,Q}(1,0,1) = 0$, so once the light pattern is $01$ the output is always $0$, independent of the heavy coordinate $j_1$. Hence $Q$ is \emph{$(i,c)$-smoothable} when $c|_{\{j_2,j_3\}} = 01$.
    \item $c|_{\{j_2,j_3\}} = 10$. We have $f_{i,Q}(0,1,0) = 1$ while $f_{i,Q}(1,1,0) = \bot$. By perfect completeness, on a correct codeword $c'$ the decoder never outputs $\bot$, so the local view $(1,1,0)$ cannot arise from any uncorrupted codeword. In particular, the only consistent output on a correct codeword with light pattern $10$ is $1$, and $Q$ is again \emph{$(i,c)$-smoothable} when $c|_{\{j_2,j_3\}} = 10$.
    \item $c|_{\{j_2,j_3\}} = 11$. In this case both rows with light pattern $11$ (`$011$' and `$111$') map to $\bot$. Under perfect completeness, this means that no uncorrupted codeword $c'$ can satisfy $c'|_{\{j_2,j_3\}} = 11$.
\end{enumerate}

Thus $Q$ is smoothable for light patterns $01$ and $10$, nonsmoothable for $00$, and inconsistent with every codeword for $11$. The same $Q$ can therefore behave differently for different codewords, so smoothability must depend on $c$. If the light coordinates are uncorrupted, this dependence is still local. For a fixed light restriction $c|_{L(Q)}$, $Q$ is $(i,c)$-nonsmoothable if and only if there exist two local views $\ell,\ell' \in \{0,1\}^Q$ with
\[
    \ell|_{L(Q)} = \ell'|_{L(Q)} = c|_{L(Q)}
    \quad\text{and}\quad
    f_{i,Q}(\ell), f_{i,Q}(\ell') \in \{0,1\}
    \text{ with } f_{i,Q}(\ell) \neq f_{i,Q}(\ell').
\]
Hence the smoothability test uses only the observed light symbols and the truth table $f_{i,Q}$.
\end{example}

We now convert the RLDC decoder into an LDC decoder by making the local rule depend only on the light pattern whenever that pattern determines a unique output, and by replacing the remaining cases by a random symbol.
Fix $i \in [k]$ and a query set $Q$ in the support of $\sQ_i$.
We define a randomized local rule
\[
    g_{i,Q} : \Sigma^Q \to \Sigma.
\]

For a pattern on the light coordinates $a \in \Sigma^{L(Q)}$, consider
all completions $z \in \Sigma^Q$ with $z|_{L(Q)} = a$.

We call $a$ \emph{bad} if either
\begin{itemize}
    \item there exist $z,z' \in \Sigma^Q$ with $z|_{L(Q)} = z'|_{L(Q)} = a$ and
    \[
        f_{i,Q}(z), f_{i,Q}(z') \in \Sigma
        \quad\text{and}\quad
        f_{i,Q}(z) \neq f_{i,Q}(z'),
    \]
    i.e., the same light pattern $a$ leads to two different non-$\bot$ outputs of $f_{i,Q}$; or
    \item for every $z \in \Sigma^Q$ with $z|_{L(Q)} = a$, we have
    \[
        f_{i,Q}(z) = \bot,
    \]
    i.e., the light pattern $a$ never yields a non-$\bot$ output.
\end{itemize}

We call $a$ \emph{good} if it is not bad.
Equivalently, $a$ is good if there exists a \emph{unique} symbol
$\sigma \in \Sigma$ such that
\[
    \exists z \in \Sigma^Q \text{ with } z|_{L(Q)} = a
    \text{ and } f_{i,Q}(z) = \sigma,
\]
and for every $z' \in \Sigma^Q$ with $z'|_{L(Q)} = a$ we have $f_{i,Q}(z') \in \{\sigma, \bot\}$.

We now define $g_{i,Q}$.
Given a local view $\ell \in \Sigma^Q$, let $a := \ell|_{L(Q)}$ be its
restriction to the light coordinates.

\begin{itemize}
    \item If $a$ is bad, we define $g_{i,Q}(\ell)$ to be a uniformly random
          symbol in $\Sigma$.
    \item If $a$ is good, let $\sigma \in \Sigma$ be the unique symbol
          guaranteed by the definition of a good pattern $a$, namely the
          unique $\sigma$ for which there exists $z \in \Sigma^Q$ with
          $z|_{L(Q)} = a$ and $f_{i,Q}(z) = \sigma$.
          In this case we set
          \[
              g_{i,Q}(\ell) := \sigma.
          \]
\end{itemize}
For a light pattern induced by an actual codeword $c$, the all-$\bot$ case cannot occur by canonicality and perfect completeness. Hence, $Q$ is $(i,c)$-nonsmoothable if and only if the light pattern $c|_{L(Q)}$ is bad. The derived family $\{g_{i,Q}\}$ will serve as the local rules of our LDC decoder.
\begin{proof}[Proof of Theorem~\ref{thm:derived_ldc}]
Define a decoder $\Dec'$ for message symbols as follows. Given oracle access to a received word $y \in \Sigma^n$ and an index $i \in [k]$:
\begin{enumerate}
    \item Sample a query set $Q \sim \sQ_i$.
    \item Output $g_{i,Q}(y|_Q)$.
\end{enumerate}
$\Dec'$ makes at most $q$ queries. Fix a codeword $c = C(b)$, an index $i\in[k]$, and a received word $y$ with $\Delta(y,c) \le r n$. We analyze the success probability of $\Dec'$ on input $(y,i)$. Let
\[
    \Err := \{ j \in [n] : y_j \neq c_j \}
\]
denote the set of corrupted coordinates. We distinguish three cases:

\begin{enumerate}
    \item \textbf{Light corruption event:}
          $L(Q) \cap \Err \neq \emptyset$.
          In this case at least one light coordinate is corrupted.
          We denote
          \[
              \eta := \Pr_{Q \sim \sQ_i}\bigl[\, L(Q) \cap \Err \neq \emptyset \,\bigr].
          \]

    \item \textbf{Light clean but nonsmoothable:}
          $y|_{L(Q)} = c|_{L(Q)}$ and $Q \in \NS_{i,c}$.
          In this case all light coordinates are correct, but the query
          set is $(i,c)$-nonsmoothable.
          By construction of $g_{i,Q}$, when the light pattern
          corresponds to a nonsmoothable query we output a uniform
          symbol in $\Sigma$, so $\Dec'$ succeeds with probability
          exactly $1/|\Sigma|$.

    \item \textbf{Light clean and smoothable:}
          $y|_{L(Q)} = c|_{L(Q)}$ and $Q$ is $(i,c)$-smoothable.
          In this case the light coordinates are uncorrupted and the
          light pattern uniquely determines $b_i$.
          By definition of $g_{i,Q}$, we have
          \[
              g_{i,Q}(y|_Q) = g_{i,Q}(c|_Q) = b_i,
          \]
          so the decoder succeeds with probability $1$.
\end{enumerate}
Let
\[
    E := \{\,L(Q)\cap \Err=\emptyset\,\},
    \qquad
    N := \{\,Q\in\NS_{i,c}\,\},
\]
and recall
\[
    \eta := \Pr_{Q \sim \sQ_i}\bigl[\, L(Q) \cap \Err \neq \emptyset \,\bigr].
\]
On $E\cap \overline{N}$ the decoder succeeds with probability $1$, and on $E\cap N$ it succeeds with probability $1/|\Sigma|$. Therefore
\[
\begin{aligned}
\Pr\bigl[\Dec'^y(i) = b_i\bigr]
&\ge
   \Pr[E\cap \overline{N}]
   + \frac{1}{|\Sigma|}\Pr[E\cap N].
\end{aligned}
\]
Since
\[
    \Pr[E\cap \overline{N}]
    =
    \Pr[E]-\Pr[E\cap N]
    =
    1-\eta-\Pr[E\cap N],
\]
the last display gives
\[
\begin{aligned}
\Pr\bigl[\Dec'^y(i) = b_i\bigr]
&\ge
   1-\eta-\Pr[E\cap N]
   + \frac{1}{|\Sigma|}\Pr[E\cap N] \\
&=
   1-\eta-\Bigl(1-\frac{1}{|\Sigma|}\Bigr)\Pr[E\cap N] \\
&\ge
   1-\eta-\Bigl(1-\frac{1}{|\Sigma|}\Bigr)\Pr[N].
\end{aligned}
\]
Equivalently, writing
\[
    \alpha := \Pr_{Q \sim \sQ_i}\bigl[\, Q \text{ is $(i,c)$-smoothable} \,\bigr]
            = 1 - \Pr[N],
\]
we have
\[
\begin{aligned}
\Pr\bigl[\Dec'^y(i) = b_i\bigr]
&\ge
   1-\eta-\Bigl(1-\frac{1}{|\Sigma|}\Bigr)(1-\alpha) \\
&=
   \frac{1}{|\Sigma|}
   + \Bigl(1-\frac{1}{|\Sigma|}\Bigr)\alpha - \eta.
\end{aligned}
\]
We next bound $\alpha$ and $\eta$.

\begin{lemma}
\label{lem:ns_mass_bound-again}
For any index $i \in [k]$ and every codeword $c = C(b)$,
\[
    \Pr_{Q \sim \sQ_i}\bigl[\, Q \in \NS_{i,c} \,\bigr]
    \;\le\;
    s \cdot |\Sigma|^{q}.
\]
\end{lemma}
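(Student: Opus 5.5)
We will prove the lemma by a resampling argument on the heavy coordinates. Fix $i$ and $c=C(b)$. Define a random received word $Y$: choose $\rho\in\Sigma^{H}$ uniformly at random, where $H=H_i$, and set $Y|_{H}=\rho$ and $Y|_{L}=c|_{L}$. Since $|H|\le\delta n$, every realization satisfies $\Delta(Y,c)\le\delta n$. The RLDC soundness guarantee for the canonical nonadaptive decoder therefore gives, for each fixed $\rho$,
\[
\Pr_{Q\sim\sQ_i}\bigl[f_{i,Q}(Y|_Q)\notin\{b_i,\bot\}\bigr]\le s .
\]
Averaging over $\rho$ yields
\[
\E_\rho\Pr_Q[\cdot]=\E_Q\Pr_\rho\bigl[f_{i,Q}(Y|_Q)\notin\{b_i,\bot\}\bigr]\le s .
\]

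Next we lower bound the inner probability for every $Q\in\NS_{i,c}$. By definition there is $b'$ with $b'_i\ne b_i$ and $c'=C(b')$ satisfying $c'|_{L(Q)}=c|_{L(Q)}$. The event $\rho|_{H(Q)}=c'|_{H(Q)}$ has probability exactly $|\Sigma|^{-|H(Q)|}\ge|\Sigma|^{-q}$, since $\rho$ is uniform on $H$ and $|H(Q)|\le|Q|\le q$. On this event $Y|_Q=c'|_Q$, so the local view equals that of the genuine codeword $c'$. By perfect completeness and canonical form (Proposition~\ref{prop:canonical}), the decoder then outputs $f_{i,Q}(c'|_Q)=b'_i\notin\{b_i,\bot\}$. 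Hence for each $Q\in\NS_{i,c}$ the inner probability is at least $|\Sigma|^{-q}$. For $Q\notin\NS_{i,c}$ we use the trivial bound $\ge 0$.

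Combining the two bounds gives $\Pr_Q[Q\in\NS_{i,c}]\cdot|\Sigma|^{-q}\le s$, which is the claim. The only delicate point is that the corruption $Y$ must not depend on $Q$. This is why we resample all of $H$ uniformly at once, instead of tailoring the corruption to each $Q$. It is also why nonadaptivity matters: the query distribution $\sQ_i$ and the heavy set $H$ are fixed independently of $Y$. Finally, we note that the local view $c'|_Q$ being fed to $f_{i,Q}$ must really come from a codeword, so that perfect completeness forces the non-$\bot$ output $b'_i$. This holds because both the light part, $c|_{L(Q)}=c'|_{L(Q)}$, and the heavy part, $\rho|_{H(Q)}=c'|_{H(Q)}$, agree with $c'$.
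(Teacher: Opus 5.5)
Your proposal is correct and follows the paper's proof essentially step for step. The paper also resamples all heavy coordinates uniformly to obtain a single corruption within distance $\delta n$ that does not depend on $Q$. It then lower bounds the fooling probability of each $(i,c)$-nonsmoothable $Q$ by $|\Sigma|^{-|H(Q)|}\ge|\Sigma|^{-q}$, using perfect completeness and canonical form on the view $c'|_Q$. Finally, it compares this with the soundness bound $s$ averaged over the resampling.
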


\begin{lemma}\label{lem:corrupted-light}
For any index $i\in[k]$ and any corruption set $\Err\subseteq[n]$ with $|\Err|\leq rn$, if $\eta:=\Pr_{Q\sim\sQ_i}[\,L(Q)\cap \Err\neq\emptyset\,]$, then
\[
    \eta \leq \frac{rq}{\delta}.
\]
\end{lemma}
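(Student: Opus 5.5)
We bound $\eta$ by a union bound over the corrupted light coordinates, using only the definition of $L_i$ and $|\Err|\le rn$. Fix $i$ and write $\sQ$, $p(j)$, $L$ as in the paper. The event $L(Q)\cap\Err\neq\emptyset$ happens exactly when some $j\in \Err\cap L$ lies in $Q$. So the first step is
\[
    \eta \;=\; \Pr_{Q\sim\sQ}\Bigl[\,\exists j\in \Err\cap L:\ j\in Q\,\Bigr]
    \;\le\; \sum_{j\in \Err\cap L}\Pr_{Q\sim\sQ}[\,j\in Q\,]
    \;=\; \sum_{j\in \Err\cap L} p(j).
\]
Equivalently, this is Markov's inequality applied to the integer random variable $|L(Q)\cap\Err|$, whose expectation is $\sum_{j\in\Err\cap L}p(j)$ by linearity of expectation.

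The second step uses the definition of light coordinates: every $j\in L$ satisfies $p(j)\le q/(\delta n)$. Hence the sum is at most
\[
    |\Err\cap L|\cdot\frac{q}{\delta n}\;\le\; |\Err|\cdot \frac{q}{\delta n}\;\le\; rn\cdot\frac{q}{\delta n}\;=\;\frac{rq}{\delta}.
\]

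Both steps are routine, so I expect no real obstacle. The only points to check are that the bound uses nothing about $y$ beyond $|\Err|\le rn$, and nothing about the truth tables. It also does not need the constraint on $r$ from Theorem~\ref{thm:derived_ldc}; that constraint only matters later, to make the final error $\varepsilon$ nontrivial.
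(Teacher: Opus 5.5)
Your proof is correct and is essentially the paper's argument. The paper also sets $X=|L(Q)\cap\Err|$ and uses linearity of expectation plus $p_i(j)\le q/(\delta n)$ on light coordinates to get $\E[X]\le rq/\delta$. It then concludes $\eta=\Pr[X\ge 1]\le \E[X]$ by Markov's inequality, which is exactly your union bound written in expectation form.
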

We postpone the proofs of Lemmas~\ref{lem:ns_mass_bound-again}
and~\ref{lem:corrupted-light}.
Assuming them, we have
\[
    \alpha = 1 - \Pr[Q \in \NS_{i,c}]
    \;\ge\;
    1 - s |\Sigma|^{q},
\]
and
\[
    \eta \le \frac{rq}{\delta}.
\]
Plugging these bounds into the previous inequality yields
\[
\begin{aligned}
\Pr\bigl[\Dec'^y(i) = b_i\bigr]
&\ge
   \frac{1}{|\Sigma|}
   + \Bigl(1-\frac{1}{|\Sigma|}\Bigr)\bigl(1 - s |\Sigma|^{q}\bigr)
   - \frac{rq}{\delta} \\
&=
   1 - s |\Sigma|^{q}\Bigl(1 - \frac{1}{|\Sigma|}\Bigr)
   - \frac{rq}{\delta}.
\end{aligned}
\]
This proves Theorem~\ref{thm:derived_ldc}. In particular, for $\Sigma = \{0,1\}$ this simplifies to
\[
    \Pr\bigl[\Dec'^y(i) = b_i\bigr]
    \;\ge\;
    1 - \frac{s 2^{q}}{2} - \frac{rq}{\delta},
\]
as claimed in Theorem~\ref{thm:binary-main}.
\end{proof}

\subsection{Proof of Lemma~\ref{lem:ns_mass_bound-again}: Bounding the probability of nonsmoothable queries}

In this subsection we show that, for any fixed codeword $c$, when we sample
a query set $Q \sim \sQ_i$, the event that $Q$ is $(i,c)$-nonsmoothable
can occur only with small probability.

\begin{proof}[Proof of Lemma \ref{lem:ns_mass_bound-again}]
Fix $i \in [k]$ and a codeword $c = C(b)$.
Define a random received word $\widetilde c \in \Sigma^n$ by resampling the heavy
coordinates:
\[
    \widetilde c_j :=
    \begin{cases}
        c_j, & j \in L_i,\\
        \text{an independent uniform symbol in }\Sigma, & j \in H_i.
    \end{cases}
\]
By construction, the only positions where $\widetilde c$ may differ from $c$ are
those in $H_i$, so
\[
    \Delta(\widetilde c,c) \le |H_i| \le \delta n,
\]
and therefore $\widetilde c$ is always within the RLDC noise tolerance from $c$.

We first show that each nonsmoothable query set $Q$ is ``fooled'' with
noticeable probability.

\begin{lemma}
\label{lem:nonsmoothable_fool_Sigma}
Fix $i$ and a codeword $c = C(b)$. For every $(i,c)$-nonsmoothable query set $Q\in\supp(\sQ_i)$,
\[
    \Pr_{\widetilde c}\bigl[\, f_{i,Q}(\widetilde c|_Q) \notin \{b_i,\bot\} \,\bigr]
    \;\ge\;
    |\Sigma|^{-|H(Q)|}
    \;\ge\;
    |\Sigma|^{-q}.
\]
\end{lemma}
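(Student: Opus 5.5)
Since $Q$ is $(i,c)$-nonsmoothable, by definition there is a message $b'\in\Sigma^k$ with codeword $c'=C(b')$ such that $c'|_{L(Q)}=c|_{L(Q)}$ and $b'_i\neq b_i$. We fix such a $b'$ and exhibit a single heavy assignment that fools $f_{i,Q}$. Consider the local view $c'|_Q$. Because the decoder is canonical and has perfect completeness (Proposition~\ref{prop:canonical}), $f_{i,Q}(c'|_Q)=b'_i$. This holds because $c'$ is a codeword agreeing with the view $c'|_Q$, and the canonical rule outputs the unique symbol determined by such a codeword. In particular, $f_{i,Q}(c'|_Q)\notin\{b_i,\bot\}$.

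Next we note that $\widetilde c|_Q$ coincides with $c'|_Q$ whenever the resampled heavy symbols match $c'$. The light part of $\widetilde c|_Q$ is $\widetilde c|_{L(Q)}=c|_{L(Q)}=c'|_{L(Q)}$ deterministically, since light coordinates are not resampled. The heavy part $\widetilde c|_{H(Q)}$ is uniform over $\Sigma^{H(Q)}$, with independent coordinates. Hence
\[
\Pr_{\widetilde c}\bigl[\widetilde c|_{H(Q)}=c'|_{H(Q)}\bigr]=|\Sigma|^{-|H(Q)|}.
\]
On this event $\widetilde c|_Q=c'|_Q$, and therefore $f_{i,Q}(\widetilde c|_Q)=b'_i\notin\{b_i,\bot\}$. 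This gives the first inequality. The second inequality follows from $|H(Q)|\le|Q|\le q$ and $|\Sigma|\ge 2$.

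The only point needing care is the use of canonical form: we must justify that $f_{i,Q}$ outputs $b'_i$, and not $\bot$ or some other symbol, on an exact codeword view. This is precisely what Proposition~\ref{prop:canonical} supplies. Alternatively, it follows directly from perfect completeness, since $\Pr[\Dec^{c'}(i)=b'_i]=1$ and $Q\in\supp(\sQ_i)$ force $f_{i,Q}(c'|_Q)=b'_i$. Nothing else is delicate. The argument also does not need the bound $|H_i|\le\delta n$; that bound is used afterwards, when we average over $Q$ and apply RLDC soundness to $\widetilde c$.
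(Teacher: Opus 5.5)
Your proof is correct and matches the paper's argument: you fix a witness $b'$ of nonsmoothability, note that with probability $|\Sigma|^{-|H(Q)|}$ the resampled heavy coordinates agree with $c'$ so that $\widetilde c|_Q = c'|_Q$, and conclude via canonical form and perfect completeness that $f_{i,Q}$ outputs $b'_i \notin \{b_i,\bot\}$. Your alternative justification is also valid and slightly more direct than invoking canonicity: perfect completeness together with $Q\in\supp(\sQ_i)$ alone already forces $f_{i,Q}(c'|_Q)=b'_i$.
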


\begin{proof}
Since $Q$ is $(i,c)$-nonsmoothable, there exists a message
$b'' \in \Sigma^k$ with codeword $c'' = C(b'')$ such that
\[
    c''|_{L(Q)} = c|_{L(Q)}
    \quad\text{and}\quad
    b''_i \neq b_i.
\]
By the definition of $\widetilde c$, the restriction $\widetilde c|_{H(Q)}$ is uniform over
$\Sigma^{H(Q)}$ and independent of $c|_{L(Q)}$.
Hence
\[
    \Pr\bigl[\, \widetilde c|_{H(Q)} = c''|_{H(Q)} \,\bigr]
    = |\Sigma|^{-|H(Q)|}.
\]
On this event we have $\widetilde c|_Q = c''|_Q$.
By perfect completeness, whenever the local view is consistent with a
codeword $c''$, the canonical decoder outputs $b''_i$ (and never $\bot$),
so
\[
    f_{i,Q}(\widetilde c|_Q) = f_{i,Q}(c''|_Q) = b''_i \neq b_i.
\]
Thus
\[
    \Pr_{\widetilde c}\bigl[\, f_{i,Q}(\widetilde c|_Q) \notin \{b_i,\bot\} \,\bigr]
    \;\ge\;
    \Pr\bigl[\, \widetilde c|_{H(Q)} = c''|_{H(Q)} \,\bigr]
    = |\Sigma|^{-|H(Q)|}
    \;\ge\;
    |\Sigma|^{-q},
\]
as claimed.
\end{proof}

Recall that $\NS_{i,c}$ is the family of $(i,c)$-nonsmoothable query sets.
Sampling $Q \sim \sQ_i$ and $\widetilde c$ as above, we obtain
\[
    \Pr_{Q \sim \sQ_i,\, \widetilde c}\bigl[\, f_{i,Q}(\widetilde c|_Q) \notin \{b_i,\bot\} \,\bigr]
    \;\ge\;
    \Pr_{Q \sim \sQ_i}[\, Q \in \NS_{i,c} \,] \cdot |\Sigma|^{-q},
\]
where we used Lemma~\ref{lem:nonsmoothable_fool_Sigma} and the fact that
$|\Sigma|^{-|H(Q)|} \ge |\Sigma|^{-q}$.

On the other hand, since $\Delta(\widetilde c,c) \le \delta n$, the relaxed soundness
condition for the $(q,\delta,1,s)$-RLDC implies
\[
    \Pr_{Q \sim \sQ_i,\, \widetilde c}\bigl[\, f_{i,Q}(\widetilde c|_Q) \notin \{b_i,\bot\} \,\bigr]
    \;=\;
    \E_{\widetilde c}\!\left[
        \Pr\bigl[\, \Dec^{\widetilde c}(i) \notin \{b_i,\bot\} \,\bigm|\, \widetilde c\bigr]
    \right]
    \;\le\;
    s.
\]
Combining the two bounds yields
\[
    \Pr_{Q \sim \sQ_i}[\, Q \in \NS_{i,c} \,]
    \;\le\;
    s \cdot |\Sigma|^{q},
\]
\end{proof}

\subsection{Proof of Lemma~\ref{lem:corrupted-light}: Bounding the probability of light error}
In this subsection we show that, when the overall error rate is small,
a random query set is unlikely to contain a corrupted light coordinate.
\begin{proof}[Proof of Lemma~\ref{lem:corrupted-light}]
Recall that
\[
    L(Q) := Q \cap L_i
\]
and
\[
    p_i(j) := \Pr_{Q \sim \sQ_i}\bigl[j \in Q\bigr],
\]
and that every light coordinate satisfies
\[
    p_i(j) \le \frac{q}{\delta n}
    \qquad\text{for all } j \in L_i.
\]

Let $\Err \subseteq [n]$ be the corruption set with $|\Err| \le rn$, and define
\[
    X := |L(Q) \cap \Err|.
\]
Using linearity of expectation, we have
\[
\E[X] = \E\Bigl[\,\sum_{j \in \Err \cap L_i} \mathbf{1}\{j \in Q\}\Bigr] = \sum_{j \in \Err \cap L_i} \Pr[j \in Q] = \sum_{j \in \Err \cap L_i} p_i(j).
\]
Since $\Err \cap L_i \subseteq L_i$ and each $j \in L_i$ satisfies
$p_i(j) \le \frac{q}{\delta n}$, it follows that
\[
\E[X] \le |\Err \cap L_i| \cdot \frac{q}{\delta n} \le |\Err| \cdot \frac{q}{\delta n} \le rn \cdot \frac{q}{\delta n}
= \frac{rq}{\delta}.
\]
By Markov's inequality,
\[
    \eta
    \;=\;
    \Pr[X \ge 1]
    \;\le\;
    \E[X]
    \;\le\;
    \frac{rq}{\delta},
\]
as claimed.
\end{proof}

%% file: LBPCPPRLDC.tex
\section{Lower Bounds for RLDCs}\label{sec:lowerbound}
Combining our main theorem with the odd-query LDC lower bound of \cite{janzer2024k} gives the following bound for nonadaptive RLDCs.
\begin{corollary}
For every constant odd integer $q\ge 3$ and constant $\delta\in(0,1)$, if $C:\{0,1\}^k\to\{0,1\}^n$ is a nonadaptive $(q,\delta,1,s)$-RLDC with $s\le 2^{-(q+1)}$, then $k\le O(n^{1-2/q}\log n)$.
\end{corollary}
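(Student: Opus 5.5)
The plan is to compose Theorem~\ref{thm:derived_ldc} (in its binary form, Theorem~\ref{thm:binary-main}) with the odd-query LDC lower bound of \cite{janzer2024k}. That bound states that for every constant odd $q\ge3$ and every $(q,\delta',\varepsilon')$-LDC $C:\{0,1\}^k\to\{0,1\}^n$ with constant $\delta'>0$ and $\varepsilon'$ bounded away from $1/2$ (say $\varepsilon'\le 1/2-\gamma$ for a constant $\gamma>0$), we have $k\le O(n^{1-2/q}\log n)$, with the hidden constant depending on $q,\delta',\gamma$. So it suffices to produce, from the RLDC, an LDC whose radius is a positive constant and whose error is a constant strictly below $1/2$.

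First, the hypothesis $s\le 2^{-(q+1)}$ gives $s2^q\le 1/2<1$, so Theorem~\ref{thm:binary-main} applies. The admissible radius interval is $r\in\bigl(0,\frac{\delta}{2q}(1-s2^q)\bigr)$, which contains $\bigl(0,\frac{\delta}{4q}\bigr)$. The resulting error is $\varepsilon=\frac{s2^q}{2}+\frac{rq}{\delta}\le \frac14+\frac{rq}{\delta}$. Choosing $r:=\frac{\delta}{8q}$ lies in this interval and gives $\varepsilon\le \frac14+\frac18=\frac38<\frac12$.

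Thus $C$ is a $(q,\frac{\delta}{8q},\frac38)$-LDC with $q$ queries. Both the radius and the error are constants, since $q$ and $\delta$ are constants. Invoking \cite{janzer2024k} then yields $k\le O(n^{1-2/q}\log n)$.

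The only real care point is matching the hypotheses of the cited lower bound. Some formulations require error at most a specific constant, or a particular normal form such as nonadaptive or smooth decoders. The decoder $\Dec'$ from the proof of Theorem~\ref{thm:derived_ldc} is already nonadaptive, since it samples $Q\sim\sQ_i$ and applies $g_{i,Q}$. If the citation needs smaller error, I would shrink $r$ further to push $\varepsilon$ down toward $1/4$. If the cited bound instead needs a smooth decoder, I would apply the standard Katz--Trevisan conversion \cite{katz2000efficiency} from LDC to smooth decoder, which only changes constants.
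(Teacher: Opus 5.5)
Your proposal is correct and matches the paper's proof: apply Theorem~\ref{thm:derived_ldc} with a constant radius strictly inside the admissible interval, then invoke the odd-query LDC lower bound of \cite{janzer2024k}. The only difference is the constant — the paper takes $r=0.01\delta/q$ to get $\varepsilon<0.3$, while your $r=\delta/(8q)$ gives $\varepsilon\le 3/8$ — and either choice leaves a constant advantage over $1/2$, which is all the cited bound needs.
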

\begin{proof}
Take $r=0.01\delta/q$. Since $s\le 2^{-(q+1)}$,
\[
    r<\frac{\delta}{4q}
      \le\frac{\delta}{2q}(1-s2^q),
\]
so the radius condition in Theorem~\ref{thm:derived_ldc} holds. Theorem~\ref{thm:derived_ldc} gives a $(q,r,\varepsilon)$-LDC with the same block length and message length, where
\[
    \varepsilon
    =\frac{s2^q}{2}+\frac{rq}{\delta}
    \le \frac14+0.01
    <0.3.
\]
The claim follows from the odd-query LDC lower bound of \cite{janzer2024k}, since the decoding success probability is greater than $0.7=1/2+0.2$.
\end{proof}